\definecolor{lightgray}{gray}{0.9}
\newtheorem{theorem}{Theorem}
\newtheorem{definition}{Definition}
\newcommand{\framework}{Femur\xspace}
\newcommand{\pgm}{PGM-index\xspace}
\begin{document}

\setlength{\textfloatsep}{3mm}

\title{\framework: A Flexible Framework for Fast and Secure Querying from Public Key-Value Store}


\author{Jiaoyi Zhang}
\email{jy-zhang20@mails.tsinghua.edu.cn}
\orcid{0009-0008-3075-6147}
\affiliation{
  \institution{Tsinghua University}
  \city{Beijing}
  \country{China}
}

\author{Liqiang Peng}
\email{plq270998@alibaba-inc.com}
\affiliation{
  \institution{Alibaba Group}
  \city{Beijing}
  \country{China}
} 

\author{Mo Sha}
\email{shamo.sm@alibaba-inc.com}
\affiliation{
  \institution{Alibaba Cloud}
  \country{Singapore}
} 

\author{Weiran Liu}
\email{weiran.lwr@alibaba-inc.com}
\affiliation{
  \institution{Alibaba Group}
  \city{Beijing}
  \country{China}
} 

\author{Xiang Li}
\email{lixiang20@mails.tsinghua.edu.cn}
\affiliation{
  \institution{Tsinghua University}
  \city{Beijing}
  \country{China}
} 

\author{Sheng Wang}
\email{sh.wang@alibaba-inc.com}
\affiliation{
  \institution{Alibaba Cloud}
  \country{Singapore}
} 

\author{Feifei Li}
\email{lifeifei@alibaba-inc.com}
\affiliation{
  \institution{Alibaba Cloud}
  \city{Hangzhou}
  \country{China}
} 

\author{Mingyu Gao}
\authornotemark[1]
\email{gaomy@tsinghua.edu.cn}
\affiliation{
  \institution{Tsinghua University}
  \city{Beijing}
  \country{China}
}

\author{Huanchen Zhang}
\authornote{Huanchen Zhang and Mingyu Gao are also affiliated with Shanghai Qi Zhi Institute. Corresponding authors.}
\email{huanchen@tsinghua.edu.cn}
\orcid{0009-0001-4821-1558}
\affiliation{
  \institution{Tsinghua University}
  \city{Beijing}
  \country{China}
}

\renewcommand{\shortauthors}{Jiaoyi Zhang et al.}
\newcommand{\para}[1]{\vspace{1mm}\noindent\textbf{{#1}}\xspace}


\begin{abstract}

With increasing demands for privacy, it becomes necessary to protect sensitive user query data when accessing public key-value databases. 
Existing Private Information Retrieval (PIR) schemes provide full security 
but suffer from poor scalability, limiting their applicability in large-scale deployment.
We argue that in many real-world scenarios, a more practical solution should allow users to flexibly determine the privacy levels of their queries in a theoretically guided way, balancing security and performance based on specific needs.
To formally provide provable guarantees, we introduce a novel concept of distance-based indistinguishability, which can facilitate users to comfortably relax their security requirements. 
We then design \framework, an efficient framework to securely query public key-value stores with flexible security and performance trade-offs. 
It uses a space-efficient learned index to convert query keys into storage locations, obfuscates these locations with extra noise provably derived by the distance-based indistinguishability theory, and sends the expanded range to the server. 
The server then adaptively utilizes the best scheme to retrieve data. We also propose a novel variable-range PIR scheme optimized for bandwidth-constrained environments. 
Experiments show that \framework outperforms the state-of-the-art designs even when ensuring the same full security level. When users are willing to relax their privacy requirements, \framework can further improve the performance gains to up to 163.9$\times$, demonstrating an effective trade-off between security and performance.



\end{abstract}

\maketitle
\setcounter{page}{1}
\section{Introduction}
The growing data volumes and pervasive cloud services have spurred efforts to protect user data privacy. Protecting user queries to public datasets is as critical as protecting the underlying database. For example, users checking phone numbers against a public scam database are vested in keeping the numbers private to the hosting server. Similarly, users querying sensitive health or financial data risk potential discrimination (e.g., biased treatment in insurance or employment) if their queries are revealed. These use cases highlight the need for secure and efficient query mechanisms for accessing public key-value stores while keeping queries confidential.

Private Information Retrieval (PIR) schemes~\cite{chor1997private,menon2022spiral,mughees2021onionpir,melchor2016xpir,corrigan2020private,kogan2021private,henzinger2023one} have emerged to address the above concerns.
They allow users to upload encrypted keys for confidential server-side computation and results return without decryption. The entire process ensures that sensitive information remains secure.
Despite advancements in PIR techniques---such as improvements in cryptographic protocols~\cite{davidson2023frodopir,gentry2009fully,brakerski2014leveled,mahdavi2022constant,ahmad2022pantheon}, encoding strategies~\cite{angel2018pir}, and pre-processing optimizations~\cite{davidson2023frodopir,chen2017simple,celi2024call,zhou2024piano}---they still face challenges with scaling. To guarantee query path privacy, most designs must process all the $n$ key-value pairs in the database with up to $O(n)$ complexity, leading to high query response time in large-scale applications. For example, Pantheon~\cite{ahmad2022pantheon} takes 1.15 seconds to execute a query on only 65,536 key-value pairs.

In this paper, we propose a more practical solution that relaxes security while upholding theoretical guarantees. Existing PIR schemes yield impractical processing times for large real-world datasets, with delays ranging from seconds to hours (Pantheon~\cite{ahmad2022pantheon} takes over 2 hours on $2^{24}$ records). Moreover, full dataset obfuscation may not be necessary in many applications. For example, an American user may only need to hide their queries within the U.S. phone numbers rather than the global dataset. In fact, industry practices often relax security by using hash functions~\cite{heinrich2021privatedrop} to partition datasets into smaller buckets to improve query performance by reducing the number of key-value pairs involved per query. However, this method lacks formal security guarantees: the bucket IDs are exposed during computation, and an uneven hash partition could produce single-pair buckets, thus revealing the user's query.

We introduce \framework, a framework enabling users to control and balance performance and privacy with theoretical security guarantees. \framework allows users to configure from no security to full security backed by our formal definition of ``relaxed security'' (i.e., \emph{distance-based indistinguishability}). Depending on the chosen level, \framework's user-side module automatically sends an obfuscated range of keys (including the user's real key) to the server, ensuring that the real key is indistinguishable under the relaxed security level without unnecessary performance loss. When this key range extends to the entire database, \framework achieves full privacy as before.

\framework consists of three main components: key-to-position conversion, obfuscated range generation, and adaptive key-value retrieval. First, the user-side \emph{key-to-position conversion} module maps query keys to their storage locations within the database, similar to existing keyword PIR schemes. The novelty lies in that we use the PGM-index~\cite{ferragina2020pgm}, a state-of-the-art, space-efficient learned index for the mapping. The prediction error of \pgm can be handled together with the obfuscated query range. Learned indexes are typically faster and smaller than traditional indexes (e.g., B+trees) and can therefore speed up the conversion while significantly reducing the index size stored on the client side.

Second, the \emph{obfuscated range generation} module adds noise to the range predicted by the \pgm before sending it to the server so that the server-side computation is confined to the key-value pairs within the range. This obfuscated range satisfies distance-based indistinguishability with a user-specified security level guaranteed by our theory. The higher the security level, the slower the query.

Third, the server-side \emph{adaptive key-value retrieval} module employs a cost model to choose the best strategy, trading between computation and network bandwidth. For high-bandwidth networks, the client side downloads all key-value pairs in the obfuscated range (of length $s$) in plaintext. This approach conceals the query key without homomorphic computation but incurs $O(s)$ transmission complexity. When the bandwidth is limited, we propose a novel variable-range PIR scheme with a transmission complexity of $O(1)$ but a computation complexity of $O(s)$.

Our experiments show that \framework scales well on a dataset containing 200 million records (a size infeasible for previous solutions) and outperforms the two baselines (Chalamet~\cite{celi2024call} and Pantheon~\cite{ahmad2022pantheon}) by 1.05$\times$ and 7.71$\times$, respectively under full security, and by 163.9$\times$ and 1206.1$\times$, respectively with a relaxed security guarantee. Additionally, \framework's offline initialization phase only takes a few minutes, compared to several hours from previous designs. We also integrated \framework into Redis~\cite{redis}, a popular in-memory key-value store, and demonstrated that \framework also supports efficient value updates in real-world scenarios.

We make the following contributions.
\vspace{-.2em}
\begin{itemize}[leftmargin=*]
    \setlength\itemsep{.5mm}
    \setlength\topsep{.5mm}
    \item We identify that full security prevents current PIR schemes from scaling and is often unnecessary in real-world applications.

    \item We introduce relaxed security by formally defining ``distance-based indistinguishability'' to allow users to trade between performance and privacy while offering provable security guarantees.
    
    \item We propose \framework, a user-centric framework that eliminates unnecessary computational and communication costs by incorporating a learned index with a novel variable-range PIR scheme.

    \item We demonstrate \framework's superior performance and scalability over the baselines via a thorough evaluation.
\end{itemize}

\section{Preliminaries}

\subsection{Private Retrieval From Public Data}\label{sec:pir}
This paper focuses on user privacy in public key-value stores by ensuring secure lookups that hide both the users' query keys and access patterns. This is often realized by traditional PIR schemes, including keyword PIRs and index PIRs. Depending on whether pre-processing is allowed, PIR schemes can be further categorized as stateless PIR (without pre-processing )~\cite{angel2018pir,melchor2016xpir,ahmad2021addra,ahmad2022pantheon} often with linear computational complexity, and as stateful PIRs~\cite{corrigan2020private,kogan2021private,henzinger2023one,davidson2023frodopir,zhou2024piano,celi2024call} which leverage offline pre-processing to reduce online computation.

\subsubsection{Index-PIR}
Index PIRs assume that the user knows the exact location of the data to be queried. 
Typically, the user encodes the location as a one-hot vector, encrypts it, and sends it to one or more untrusted servers for private retrieval. 
The server then performs a privacy-preserving inner product between the encrypted query and its dataset, returning the encrypted result to the user. 
We focus on the single-server scenario, as the multi-server setup relies on the less practical assumption of non-colluding servers.

Existing index-PIR schemes like SimplePIR~\cite{henzinger2023one} and FrodoPIR~\cite{davidson2023frodopir} reduce online computation via one-time pre-computation, where users download specific ``hints'' in the offline phase.
However, their online phase still scales linearly with dataset size. Piano~\cite{zhou2024piano} takes a different approach where the user continuously uploads sets of multiple locations during pre-processing and obtains the XOR of these data as hints. 
The online query must match one of these hints, accelerating queries but incurring offline communication overhead equal to the dataset size, rendering it impractical.
Since offline interactions between the user and the server are inevitable (e.g., establishing a communication channel), \framework also allows transferring hints during the offline phase. However, it ensures that the hint size remains significantly smaller than the dataset. To achieve this, \framework introduces a novel design that leverages learned indexes for faster pre-processing and efficient online processing.

Additionally, optimizations in cryptographic primitives~\cite{mughees2021onionpir,menon2022spiral} and batch processing~\cite{yeo2023lower,mughees2023vectorized} are orthogonal to our design and could enhance our framework with minor adjustments.

\subsubsection{Keyword-PIR}
In real-world scenarios, users often do not know the exact location of the data they wish to retrieve. Keyword PIRs address this limitation by allowing queries based on keywords and are typically implemented on top of index PIRs. The earliest approach~\cite{chor1997private} achieves this by mapping keywords to data locations through logarithmic rounds of communication. However, this introduces substantial overhead, rendering the method inefficient. 
For example, retrieving data from one million key-value pairs may require up to 21 round trips, resulting in significant overhead.

Recent keyword PIR schemes reduce rounds to one. Approaches based on fully homomorphic encryption (FHE)~\cite{gentry2009fully,brakerski2014leveled} achieve this by using equality operators~\cite{ahmad2022pantheon,mahdavi2022constant}. In these schemes, the user encrypts the desired keyword and uploads it to the server, which performs an equality check between the encrypted keyword and all keys in the dataset. This results in an encrypted one-hot vector that can be used for consequent index PIR. While these methods reduce communication, equality checks remain computationally expensive. 
To further enhance efficiency, Chalamet~\cite{celi2024call} employs techniques such as cuckoo hashing or filters to map keys to multiple potential locations, using index PIR to retrieve and combine these values. 
Compared to FHE-based methods, Chalamet reduces the online server-side computations, but still incurs high communication overhead and requires substantial pre-processing.

These schemes improve retrieval efficiency while maintaining full security. However, \framework provides extra flexibility by allowing users to adjust security levels according to their specific needs, achieving a better balance between performance and privacy.

\subsection{Differential Privacy}\label{sec:bg_dp}
Differential privacy (DP) is a rigorous mathematical method widely used in database systems, aimed at preventing adversaries from inferring individual data from query results by adding controlled noise.
Classical DP applications assume the presence of a trusted party that collects all the data and adds noise to the entire dataset, an approach adopted by various organizations including Google~\cite{amin2022plume} and Uber~\cite{johnson2018towards,johnson2020chorus}.
In contrast, Local Differential Privacy (LDP) removes the need for a trusted party by allowing users to obfuscate data locally before sharing, as used by Google~\cite{erlingsson2014rappor,aktay2020google}, Apple~\cite{tang2017privacy}, and Meta~\cite{messing2020facebook}.
Besides, users have the flexibility to choose their desired privacy level based on the sensitivity of their information~\cite{alvim2018local}. 

\begin{definition}[Local Differential Privacy, LDP]
An obfuscation mechanism $\mathcal{M}:\mathbb{D}\rightarrow \mathbb{O}$ satisfies $\epsilon$-LDP with privacy level $\epsilon$ ($\epsilon\geq 0$) if for any $x, x'\in \mathbb{D}$ and any output $y\in \mathbb{O}$, we have
\begin{equation}
    \Pr[\mathcal{M}(x)=y]\leq e^\epsilon\cdot \Pr[\mathcal{M}(x')=y]
\end{equation}
\end{definition}
The private parameter $\epsilon$ represents the privacy level, with larger $\epsilon$ indicating weaker privacy protection provided by 
$\mathcal{M}$. To satisfy the condition of LDP, $\mathcal{M}$ must effectively hide all differences between data entries, including the difference between the maximum and minimum values. In practice, this often requires adding a significant amount of noise, leading to reduced usability.

\begin{definition}[Distance-based Local Differential Privacy]
An obfuscation mechanism $\mathcal{M}:\mathbb{D}\rightarrow \mathbb{O}$ satisfies $\epsilon$-dLDP with privacy level $\epsilon$ ($\epsilon\geq 0$) if for any $x, x'\in \mathbb{D}$ such that $\left|x-x'\right|\leq t$, and for any possible output $y\in \mathbb{O}$, we have
\begin{equation}
    \Pr[\mathcal{M}(x)=y]\leq e^{t\epsilon}\cdot \Pr[\mathcal{M}(x')=y]
\end{equation}
\end{definition}

According to this definition, the indistinguishability between any two sensitive data points decreases as the distance $t$ between them increases~\cite{chatzikokolakis2013broadening}. This strikes a balance between privacy and utility and is more suitable for practical application scenarios.

\begin{theorem}[Post-Processing~\cite{dwork2014algorithmic}]\label{theo:post-process} Let  $\mathcal{M}:\mathbb{D}\rightarrow \mathbb{O}$ be an $\epsilon$-dLDP obfuscation mechanism, and $f: \mathbb{O} \rightarrow \mathbb{O}'$ be any randomized function. Then, $f \circ \mathcal{M}$ remains $\epsilon$-dLDP. \end{theorem}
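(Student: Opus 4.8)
The plan is to exploit the defining feature of a randomized post-processing map $f$: its internal randomness depends only on the intermediate output $y$ produced by $\mathcal{M}$, never on the original input $x$. This independence is exactly what lets the dLDP guarantee survive composition, so the whole argument reduces to tracking how the $e^{t\epsilon}$ factor propagates through the law of total probability.

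First I would fix an arbitrary pair $x, x' \in \mathbb{D}$ with $|x - x'| \le t$ and an arbitrary final output $z \in \mathbb{O}'$, then expand the composed probability by conditioning on the intermediate value $y$. Writing $\Pr[f(y) = z]$ for the input-independent transition probability of $f$, the discrete case gives
\begin{equation}
\Pr[(f \circ \mathcal{M})(x) = z] = \sum_{y \in \mathbb{O}} \Pr[\mathcal{M}(x) = y]\, \Pr[f(y) = z].
\end{equation}
Next I would invoke the $\epsilon$-dLDP property of $\mathcal{M}$ term by term: since $|x - x'| \le t$, each factor obeys $\Pr[\mathcal{M}(x) = y] \le e^{t\epsilon}\,\Pr[\mathcal{M}(x') = y]$. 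Because every weight $\Pr[f(y) = z]$ is nonnegative, the inequality is preserved under the weighted sum and the common factor $e^{t\epsilon}$ pulls out, yielding
\begin{equation}
\Pr[(f \circ \mathcal{M})(x) = z] \le e^{t\epsilon} \sum_{y \in \mathbb{O}} \Pr[\mathcal{M}(x') = y]\, \Pr[f(y) = z] = e^{t\epsilon}\,\Pr[(f \circ \mathcal{M})(x') = z],
\end{equation}
which is precisely the $\epsilon$-dLDP condition for the composed mechanism.

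Since this is a textbook-style argument, I do not expect a deep obstacle; the only care points are handling a continuous $\mathbb{O}$ by replacing the sum with an integral against a density (the $e^{t\epsilon}$ factor still factors out of the integral, so nothing changes), and noting that a deterministic $f$ is merely the special case $\Pr[f(y)=z] \in \{0,1\}$ requiring no separate treatment. The one conceptual point I would state explicitly is that $f$'s randomness must be drawn independently of $x$; were it allowed to depend on the input, the factorization in the first display would break and the bound would fail to go through. That independence assumption is the real load-bearing hypothesis, and making it precise is the part most worth flagging in the write-up.
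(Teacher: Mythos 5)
Your argument is correct. Note that the paper itself gives no proof of this statement --- it imports the post-processing property by citation to Dwork and Roth --- so there is no in-paper proof to compare against; your direct conditioning on the intermediate output $y$ and termwise application of the $e^{t\epsilon}$ bound is the standard argument (the cited reference phrases it slightly differently, first handling deterministic $f$ and then treating a randomized $f$ as a convex combination of deterministic maps, but the two routes are equivalent), and your explicit flagging that $f$'s randomness must be independent of the input $x$ is exactly the right load-bearing hypothesis to make precise.
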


Immunity to post-processing is a key property of differential privacy, meaning that arbitrary transformations can be performed on the output without compromising privacy guarantees.

\subsection{Learned Indexes}
Similar to PIR schemes that employ filters for keyword PIR, index structures (e.g., B+tree~\cite{bayer1972symmetric} or learned indexes) can also be used to map keywords to their corresponding locations before retrieval. Learned indexes, first introduced by Kraska et al.~\cite{kraska2018case}, utilize data distribution characteristics to build more efficient index structures. Specifically, it focuses on data distribution and rank to map a given key to its corresponding memory address~\cite{wongkham2022updatable,lan2023updatable,zhang2022carmi,ding2020alex,wu2021updatable,zhang2024making}. This operation can be understood as a cumulative distribution function (CDF) applied to the key distribution.

Most learned indexes follow a similar structure and search process. Given a set of key-value pairs, they use a construction algorithm to create a tree structure, typically consisting of a root node, one or more levels of internal nodes, and leaf nodes that manage the key-value pairs. Each node contains a simple linear model and a small amount of metadata representing the distribution of a subset of the entire dataset.
To query a key, the learned index performs a top-down traversal. At each level, a model is used to predict the next node's location to be accessed at the next level, continuing until a leaf node is reached. The model in the leaf node predicts the position (i.e., rank) of the query key, and a last-mile search is performed within a small range to determine the exact key location.

\section{Motivation}\label{sec:motiv}
We focus on a scenario where a server hosts a public key-value store, allowing users to retrieve values by their keys without revealing which specific keys they are querying. 
Such scenarios are common, including untraceable browsing~\cite{kogan2021private,wu2016privacy,henzinger2023private}, contact discovery~\cite{borisov2015dp5,demmler2018pir}, password leakage detection~\cite{thomas2019protecting}, anonymous messaging~\cite{angel2018pir,angel2016unobservable,mittal2011pir,kwon2015riffle}, etc. 
For example, at WWDC 2024, Apple introduced a live caller ID lookup feature that enables users to obtain information about incoming calls from a public dataset without revealing the queried phone number to the server. This feature helps safely block nuisance calls, offering both security and convenience.

In such scenarios, typically the server is the data owner. The users can privately perform read-only lookups to the public key-value store, but usually not allowed to modify the data. Only the server may periodically update its data as needed. 
This is fundamentally different from the cases where the users outsource their own databases to untrusted servers. But we emphasize that it is still a common and important problem in real world, as exemplified above. 
For example, an investor may wish to privately retrieve information about a particular stock, such as its trading record, without intending to update any public data. However, these queries often involve sensitive information (e.g., investment interests), necessitating privacy guarantees.

Most existing approaches rely on PIR schemes to support such scenarios. 
As discussed in~\Cref{sec:pir}, PIR usually uses an encrypted one-hot bitmap to represent the location of the desired key-value pair, and performs homomorphic inner products with all key-value pairs in the dataset, thereby concealing the queried key. However, their computational cost scales with the size of the database, making them impractical for large datasets. 
For example, Pantheon~\cite{ahmad2022pantheon} takes several hours to execute a query on a dataset containing 16,777,216 key-value pairs. 
Although Pantheon can parallelize computations by distributing the dataset across multiple machines, this does not fundamentally reduce the computational complexity of each query, leaving the scalability problem unresolved.

In this paper, we argue that such impractical performance of existing schemes stems from their rigid guarantees of full security, which require all key-value pairs to be involved in computation or transmission. Current PIR schemes cannot bypass this requirement, even with optimized cryptographic algorithms or preprocessing. 
However, this requirement is not always necessary in practical applications. In real-world scenarios, users' security requirements are independent of the dataset size and often remain fixed. 
For example, when protecting a user's address, the user may be comfortable revealing her country (e.g., the United States) but not her city (e.g., San Francisco). This allows the query to be executed securely over all addresses within the country, rather than over the entire global dataset. 
Moreover, in a geographic database like Open Street Map, which is keyed by latitude and longitude, nearby data items in the database are geographically close to each other. 
Practical applications often involve scenarios where the storage location in the database is correlated with the value of the key. 
Meanwhile, such ``weaker'' security is common in industrial solutions, which often use hash functions to partition the complete dataset into smaller subsets. 
However, such methods not only leak the bucket ID where the queried key is located but also fail to provide provable privacy guarantees in general scenarios.
For example, if a bucket contains only a single entry, the queried key is immediately exposed.

To address these issues, we believe it is essential to offer flexible degrees of relaxed security based on the user's specific needs. This flexibility should still offer provable theoretical guarantees at all security levels. With a solid theoretical foundation, users can feel comfortable using more practical solutions that are not fully secure. 
To achieve this goal, the first requirement is to formally define how to relax security levels with theoretical guarantees. 
Then, a practical system design is needed to realize flexible privacy and ensure efficient query processing at all security levels. 
In addition, it should not impose significant computational or storage burdens on users, meaning offline computation and storage should be minimized.

\section{Relaxed Security for Private Retrieval}\label{sec:relaxed}
Our objective is to develop a flexible and scalable framework that enables users to privately retrieve the value associated with a key from a public key-value store. The primary security goal is to hide the user's lookup queries from the server, while the server's key-value data are public.
Our scenario resembles that of the keyword PIR~\cite{angel2018pir,ahmad2022pantheon,celi2024call,mahdavi2022constant}, with the key distinction that it permits the flexibility of enforcing different levels of relaxed security based on specific user requirements. 
In this section, we formalize our novel definition of relaxed security and the corresponding threat model.

\subsection{Problem Formulation}\label{sec:formulation}

Let $DB = \{(k_0, v_0), (k_1, v_1), \ldots, (k_{n-1}, v_{n-1})\}$ denote the set of public key-value pairs held by the server, where the key set $K = \{k_0, k_1, \ldots, k_{n-1}\}$ serves as the primary key of the database (no duplication). $DB$ is sorted by primary keys. 
Each key-value pair is stored in plaintext with a uniform length to prevent attackers from inferring the queried key based on length variations.
Let $k_{\text{target}}$ be the querying key of the user, and $v_{\text{target}}$ be the corresponding value. We discuss our problem below in terms of correctness and privacy.

\subsubsection{Correctness}
If $k_{\text{target}} \in K$, the server must return the corresponding $v_{\text{target}}$; otherwise, it returns null, allowing the client to readily verify that $k_{\text{target}}$ does not exist.

\subsubsection{Privacy: distance-based indistinguishability}\label{sec:privacy}
Our privacy definition aims to protect query privacy within a distance less than $t$ in the database $DB$. 
Here, distance refers to the difference in storage positions of two querying keys within the sorted key-value store, with $t$ representing the maximum allowable distance specified by the client.
Let $q_i$ and $q_j$ represent queries corresponding to keys $k_i$ and $k_j$, respectively (possibly chosen by the adversary), where $|i - j| \leq t$.
The user selects one query to execute, while the adversary observes the resulting events, denoted by $O_i$ and $O_j$ (e.g., the key-value pairs involved in the computation), and attempts to distinguish which query was executed.

Inspired by LDP introduced in~\Cref{sec:bg_dp}, we define a scheme as achieving distance-based indistinguishability (denoted as $\epsilon$-dist indistinguishability) 
for a given maximum allowable distance $t$ ($t>0$), 
if there exists a non-negative constant $\epsilon$ such that for any pair of queries $q_i$ and $q_j$ where $|i - j| \leq t$ ($i, j \in [0, n)$), and for all possible adversarial observations $O$ in the observation space $\Omega$, the following condition holds:
\begin{equation}\label{equ:definition}
\Pr[O \mid q_i] \leq e^{\epsilon} \cdot \Pr[O \mid q_j]
\end{equation}
where $\Pr[O \mid q_i]$ represents the probability of the observation $O$ given that $q_i$ has been issued.
This ensures that the adversary cannot distinguish between queries $q_i$ and $q_j$ based on observed events, thereby maintaining distance-based indistinguishability. 
The parameters $\epsilon$ and $t$ together determine the security guarantees, where $\epsilon$ is the privacy parameter and $t$ is the maximum allowable distance between two queries for the indistinguishability guarantee to hold. When $t$ is fixed, increasing $\epsilon$ weakens the indistinguishability, resulting in more relaxed security guarantees. Conversely, when $\epsilon$ is fixed, a larger $t$ strengthens the security guarantees. In \framework, the ratio of these two parameters determines the expected number of data points within the obfuscation range, as discussed in~\Cref{sec:dp}. 

To minimize user burden, we set the default value of $\epsilon$ to $2^{-6}$, since $t$ is generally more intuitive for users to adjust.
This empirical default value of $\epsilon$, chosen as a power of 2, aligns with the highest security levels in prior DP-related work~\cite{li2022opboost}.
Users can adjust $t$ to specify how many neighboring queries should be included to make the real query indistinguishable among them, allowing for a trade-off between privacy and performance based on individual needs. 
For example, consider a dataset of 100 queries and $DB = \{(k_0, v_0), (k_1, v_1), \ldots, (k_{99}, v_{99})\}$. Suppose the user queries $k_i = k_{20}$ and sets $t$ to 10. In this case, any potential output corresponding to a query $q_j$ for any key within the range $[k_{10}, k_{30}]$ and the query $q_{20}$ for $k_{20}$ satisfies~\Cref{equ:definition}. This ensures that adversaries cannot determine which specific key the user is querying within $[k_{10}, k_{30}]$, even when observing the access pattern.

\begin{figure*}[t]
    \centering
    \includegraphics[width=0.88\linewidth]{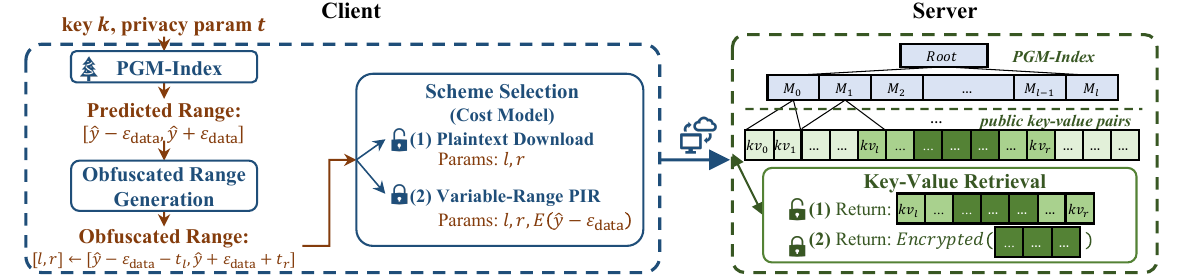}
    \caption{The Core Components of \framework Framework.}
    \label{fig:femur}
\end{figure*}

Intuitively, we can obfuscate the real query by adding fake queries. Let $S_i$ represent the set of queries that includes $q_i$ and extra fake queries generated by the client through a perturbation mechanism that satisfies $\epsilon$-dist indistinguishability, where $|S_i|$ is the number of queries. 
Then, the probability of correctly guessing $q_i$ is $\frac{1}{|S_i|}$. If $|S_i| = n$, full security and maximum uncertainty are achieved. If $|S_i| < n$, it corresponds to relaxed quantity-based agnosticism, reducing the uncertainty. 
Finally, if $|S_i| = 1$, no security is provided, as the adversary can directly infer the query.

\subsection{Threat Model}\label{sec:threat_model}
The user locally owns a trusted client machine. It aims to query some data in the public key-value store on an untrusted server. The stored data is in plaintext. A semi-honest adversary may control the server and want to steal sensitive information about which item is returned. The adversary can monitor incoming queries from users, observe server responses, and see any operations performed by the server, including which data are involved in the computations (i.e., the data access patterns). It can measure the execution time, which depends on the number of queries but is not affected by which specific keys are queried.
It can also monitor, record, and analyze data communication between the client and the server, even utilizing any prior knowledge and observations, as well as the query history. 
However, the adversary cannot break cryptographic schemes like AES. This ensures that the adversary cannot decrypt the encrypted data (i.e., the querying key and the returned result). And it will not tamper with any data or code executed on the server. 
\section{Design Overview}
\framework allows users to flexibly control the privacy-preserving granularity of queries for efficiently processing. 
\framework has two phases: offline initialization and online query phase. 
The initialization phase is executed offline only once when there is no database update, allowing users to gather essential information from the server.
The online query phase includes the entire process from client-side query generation to receiving results from the server.
In this section, we outline the key components and workflow of \framework.

\subsection{Offline Initialization Phase}
When a client arrives, it begins by exchanging one-time information with the server during the initialization phase. Specifically, the server provides the client with essential details about the database, including the number of key-value pairs, the bit length of each pair, and the available retrieval schemes with their corresponding parameters. 
Besides, the server supplies an auxiliary index to allow the client to locally convert querying keys into (approximate) server storage locations. 
\framework offers two optional retrieval schemes: \emph{plaintext download} with no extra parameters, and \emph{variable-range PIR}, which requires three cryptographic parameters: $N$, $p$, and $q$ (as detailed in~\Cref{sec:variablePIR}). 
The client then constructs the public and private cryptographic keys needed to encrypt/decrypt queries in the variable-range PIR scheme. 
Upon completing this step, the client sends its cryptographic public key to the server. The server retains each client's public key for future interactions.

The auxiliary index structure provided by the server is the \pgm, a space-efficient learned index that does not store the actual key values but only the model parameters (e.g., slopes, intercepts, and partitioning keys).
This reduces the data transmitted to clients to tens of KB to a few MB, greatly cutting initialization overhead.

If the key-value store is updated (usually periodically by the server; \Cref{sec:motiv}), the initialization phase needs to be re-executed. Particularly, the new auxiliary index structure needs to be synchronized with the clients. \Cref{sec:update} further discusses the details.

\subsection{Online Query Phase}
The online phase includes the entire query lookup process. After determining the querying key $k$ and the desired privacy level $t$, the client generates a secure query range utilizing three components provided by \framework: key-to-position conversion, obfuscated range generation, and scheme selection. The obfuscated range boundaries and the selected retrieval scheme are then sent to the server. The server processes the query using the specified scheme and returns the corresponding key-value pairs. Finally, the client retrieves the desired value by performing a simple verification. Below we briefly describe the main building blocks of \framework, as shown in~\Cref{fig:femur}.

\subsubsection{Key-to-Position Conversion}
We employ the \pgm, a learned index structure, to map the user's querying key to its possible locations within the dataset (\Cref{sec:index}). It guarantees that the querying key will be found within these positions unless it does not exist in the database. The parameter $\varepsilon_{\text{data}}$ in the \pgm determines the range of possible positions. For a given key $k$, the \pgm outputs a predicted position $\hat{y}$, and the possible locations are then bounded by $[\hat{y} - \varepsilon_{\text{data}}, \hat{y} + \varepsilon_{\text{data}}]$.

\subsubsection{Obfuscated Range Generation}
In this step, we efficiently convert the range $[\hat{y} - \varepsilon_{\text{data}}, \hat{y} + \varepsilon_{\text{data}}]$ into a obfuscated range $[l, r]=[\hat{y} - \varepsilon_{\text{data}}-t_l, \hat{y} + \varepsilon_{\text{data}}+t_r]$. 
Specifically, this obfuscated range is generated using a noise generation mechanism that satisfies distance-based indistinguishability (\Cref{sec:dp}). It ensures that the server cannot infer the real key being queried, while eliminating the need for all data points to be involved in the computation.

\subsubsection{Scheme Selection}
Once the obfuscated range is determined, our framework uses a cost model to select the most efficient retrieval scheme for the current query between plaintext downloads and variable-range PIRs (\Cref{sec:retrieval}). 
The cost model is lightweight and can be executed efficiently on the client side, leading to minimal overhead.
For plaintext downloads, only the unencrypted boundaries $l$ and $r$ are sent to the server. 
For variable-range PIRs, besides $l$ and $r$, the left boundary ($\hat{y} - \varepsilon_{\text{data}}$) of the predicted range is encrypted and also sent to the server. 

\subsubsection{Server-Side Query Processing}
Upon receiving the query request, the server processes the query using the designated retrieval scheme (\Cref{sec:retrieval}). 
For plaintext downloads, the server sends the key-value pairs in the range $[l, r]$ directly back to the client in unencrypted form. For variable-range PIR, the server performs homomorphic encryption computations and returns a single ciphertext containing all key-value pairs in $[\hat{y} - \varepsilon_{\text{data}}, \hat{y} + \varepsilon_{\text{data}}]$. 
We carefully control $\varepsilon_{\text{data}}$ and the plaintext encoding scheme to guarantee the queried pair is always returned. 
Note that the server is unaware of the specific granularity of the querying key or the user's privacy settings throughout the retrieval and transmission process.

\section{Key-to-Position Conversion}\label{sec:index}
To facilitate privacy-preserving data retrieval on the server side, the querying key needs to be converted into the corresponding position of the key-value pair in the dataset, as in traditional keyword PIR schemes. 
Database index structures~\cite{anneser2022adaptive,bayer1972symmetric,zhang2022carmi,zhang2024making,DBLP:conf/sigmod/ZhangLAKKP20,DBLP:conf/sigmod/ZhangAPKMS16,ding2020alex,ferragina2020pgm,lan2023updatable,wongkham2022updatable} are exactly designed for such key-to-position conversion. 
In \framework, we propose to apply PGM-indexes to perform this conversion on the client side, allowing direct processing of the querying key in plaintext.
We will discuss the advantages and disadvantages of storing index structures on the client side in~\Cref{sec:client-index}, along with the specific requirements for selecting a suitable index structure for our scenario. Then, we introduce the details of our choice, the \pgm, in~\Cref{sec:pgm}.

\subsection{Client-Side Indexes}\label{sec:client-index}
In \framework, while the index structure used for key-to-position conversion resides on the client side, it is created and maintained by the server, with the client only storing a static copy.
This static copy requires no maintenance on the client side and functions like pre-downloaded hints in other PIR schemes~\cite{celi2024call,davidson2023frodopir,henzinger2023one}.
Sharing this index of the public database poses no security concerns.

The benefits of this approach align with those of PIR schemes, as static index copies can be downloaded during the initialization phase and serve as a local cache to accelerate query execution during the online phase. Specifically, in \framework, the client-side index efficiently maps query keys to predicted locations, which are then expanded using our obfuscated range generation method. This narrows the range of key-value pairs involved in subsequent computations, significantly enhancing query performance. In contrast, prior schemes speed up computations by pre-computing parts of the lookup process but fail to reduce the overall data volume processed.

The main drawback of storing a static copy on the client side is the potential for index staleness due to database updates. Similar to PIR solutions~\cite{patel2023don}, \framework restricts to periodic batch updates for keys (\Cref{sec:update}). Thus the overhead for the client to fetch the latest index from the server is insignificant. 
Moreover, selecting a compact index could mitigate this issue by reducing the time required to download updated versions, further improving efficiency.

Consequently, we have two requirements for the index structure. 
First, to map querying keys to database locations, the index should provide item-level granularity rather than block-level granularity. This allows each data point to be obfuscated at the finest granularity, thus reducing the noise (i.e., the number of fake queries) needed for indistinguishability.
Traditional structures like B+trees~\cite{bayer1972symmetric}, which organize data into pages or large leaf nodes, map querying keys to page IDs. Even if key-value pairs are evenly distributed, noise must be introduced at the page level to satisfy privacy requirements, forcing the obfuscated range to include entire pages and increasing the volume of accessed data.
Second, the size of the index must remain small to reduce the overhead of the client when downloading the index during initialization and after database updates.
Large structures (e.g., hints in the offline phase of many PIR schemes~\cite{celi2024call,henzinger2023one,davidson2023frodopir,zhou2024piano,menon2022spiral}) can become bottlenecks, especially in scenarios where multiple clients simultaneously access the server, leading to bandwidth constraints.

We find that learned indexes can effectively fulfill these two requirements. 
First, learned indexes directly map a querying key to a predicted position in the entire sorted array, providing item-level prediction granularity. We will thoroughly discuss the theoretical differences between adding noise to these two indexes in~\Cref{sec:dp}. 
Second, they consume significantly less memory compared to B+trees, achieving compression rates for non-leaf nodes up to 2,000$\times$ smaller than the internal nodes of a B+tree~\cite{zhang2024making,marcus2020benchmarking,wongkham2022updatable}.

\begin{figure}[!t]
    \centering
    \includegraphics[width=0.95\linewidth]{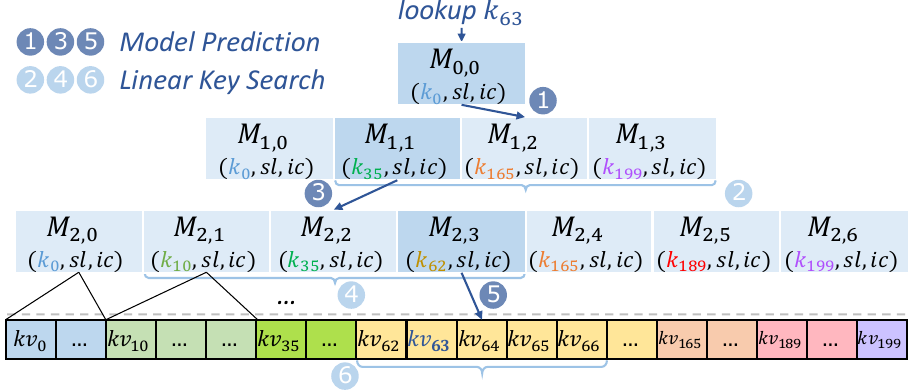}
    \caption{An Example of the \pgm.}
    \label{fig:pgm_index}
\end{figure}

\subsection{PGM-Indexes}\label{sec:pgm}
We integrate a state-of-the-art space-efficient learned index, the \pgm~\cite{ferragina2020pgm}, into \framework to facilitate key-to-position conversion with item-level granularity flexibility. 
As in~\Cref{fig:pgm_index}, the \pgm consists of multiple layers of simple linear regression models, where each model node is defined by only two parameters: slope and intercept (i.e., $\hat{y}=\text{slope}\times k+\text{intercept}$), as well as the minimum key (partitioning key) of the sub-dataset it manages. 
The index has two hyperparameters: $\varepsilon_{\text{model}}$ and $\varepsilon_{\text{data}}$, which determine the maximum allowable prediction errors for all non-leaf nodes and all leaf nodes, respectively.
For a predicted position $\hat{y}$, the desired key-value pair is guaranteed to reside in the range $[\hat{y}-\varepsilon_{\text{data}},\hat{y}+\varepsilon_{\text{data}}]$,
otherwise this pair is not in the key-value store.

We provide an example of the \pgm with $\varepsilon_{\text{model}}=1$ and $\varepsilon_{\text{data}}=2$ in~\Cref{fig:pgm_index}. 
To look up a given key $k_{63}$, model $M_{0,0}$ uses its slope and intercept to predict the next layer's position: $\hat{y} = sl \cdot k_{63} + ic = 2$. The next-layer models within the range $[\hat{y}-\varepsilon_{\text{model}}, \hat{y}+\varepsilon_{\text{model}}] = [M_{1,1},M_{1,3}]$ are searched, locating $M_{1,1}$ (since $k_{35}\leq k_{63}\leq k_{165}$). 
Then, $M_{1,1}$ predicts $[M_{2,1}, M_{2,3}]$ for layer 2 and $M_{2,3}$ is identified similarly. Recursively, $M_{2,3}$ predicts position 64,
narrowing the last-mile search range to $[kv_{62}, kv_{66}]$ with $\pm \varepsilon_{\text{data}}$. Finally, $kv_{63}$ is successfully located using binary search.

For each fixed dataset, the \pgm is constructed using a bottom-up hierarchical approach. The bottom layer scans the ordered key-value pairs with a greedy algorithm of time complexity $O(n)$. 
As described in~\cite{ferragina2020pgm}, constructing the linear models reduces to constructing the convex hull of a set of points. 
Starting at index $i=0$, the algorithm checks whether the current $i$-th key-value pair can be added to the current latest $j$-th model without exceeding $\varepsilon_{\text{data}}$. If true, it moves to the next key-value pair. Otherwise, a linear model (i.e., slope and intercept) is determined by the line that splits the rectangle into two equal-sized halves. Then, a new model $j+1$ is initialized starting from index $i$. 
This ensures that each model manages key-value pairs constrained by a rectangle of height $2\varepsilon_{\text{data}}$. 
This process continues until all key-value pairs are scanned. 
Upper layers follow a similar approach but operate on sub-datasets, where prediction errors are constrained by $\varepsilon_{\text{model}}$. For example, when constructing layer 0, the algorithm processes the pairs $\{\{k_0,0\}, \{k_{35},1\}, \{k_{165},2\}, \{k_{199},3\}\}$. Construction continues until a single model remains at the top layer. More details are provided in~\cite{ferragina2020pgm}.
To maintain simplicity within \framework, we use the default values of the \pgm ($\varepsilon_{\text{data}} = 64$, $\varepsilon_{\text{model}} = 4$) across all datasets, avoiding parameter tuning complexity.

The \pgm is compact, reducing communication overhead during initialization. During the online phase, it can quickly convert a queried key to a predicted range containing the desired key's location.
The size of this range is $2 \times \varepsilon_{\text{data}}+1$, where all positions except the correct one act as fake queries. However, the predicted range does not yet satisfy distance-based indistinguishability, which will be addressed in the next section.
\section{Obfuscated Range Generation}\label{sec:dp}

In this section, we describe how to generate obfuscated ranges that satisfy distance-based indistinguishability. \framework allows users to flexibly specify a relaxed security level for each query. 
It then employs a noise-generation mechanism, which expands the predicted range $[\hat{y}-\varepsilon_{\text{data}},\hat{y}+\varepsilon_{\text{data}}]$, derived from the key-to-position conversion step, to a wider obfuscated range $[l, r]$.
To achieve both high performance and sufficient security, the length of the obfuscated range is set to the minimum value that still ensures distance-based indistinguishability depending on the user-specified security level.
Specifically, any two queries whose keys are within a specified distance $t$ are indistinguishable to the adversary. Larger $t$ requires a wider obfuscated range. 
The obfuscated range $[l, r]$ is then sent to the server, and only data points within this range need to be involved in the computation and communication.



We first employ the exponential mechanism, a widely used approach in differential privacy, to generate noise. 
The exponential mechanism assigns probabilities to points in a given set $\mathbb{D}$, typically based on the distance between the point $i\in \mathbb{D}$ and the real value $x$:
\begin{equation}\label{eq:ldp}
\begin{aligned}
p_{x,i} = \Pr[o = i] = \frac{e^{-\left| x-i \right| \cdot \epsilon_{\text{dp}} / 4t}}{\sum_{j \in \mathbb{D}} e^{-\left| x-j \right| \cdot \epsilon_{\text{dp}} / 4t}}\\
\end{aligned}
\end{equation}
Points closer to the real value would have higher probabilities of being chosen, thereby maintaining indistinguishability while reducing the impact on performance. A point is then randomly sampled as the new boundary based on these probabilities. This process is carried out independently for the left and right boundaries, with $[0, \hat{y}-\varepsilon_{\text{data}}]$ and $[\hat{y}+\varepsilon_{\text{data}}, n)$ as the given $\mathbb{D}$, respectively, and produces the obfuscated range $[l, r]$. We then have the following theorem for the privacy guarantee.

\begin{theorem}\label{theo:ldp}
The exponential mechanism described above provides $\epsilon_{\text{dp}}$-dist indistinguishability privacy guarantee for any pair of values $x, x' \in \mathbb{D}$, where $\left|x-x'\right| \leq t$, and, $t,\epsilon_{\text{dp}} > 0$.
\end{theorem}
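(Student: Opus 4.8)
The plan is to verify the defining inequality of $\epsilon_{\text{dp}}$-dist indistinguishability (\Cref{equ:definition}) directly for the sampling distribution in \Cref{eq:ldp}. Concretely, I fix arbitrary $x, x' \in \mathbb{D}$ with $|x-x'| \leq t$ and an arbitrary output $i \in \mathbb{D}$, and reduce the whole claim to the single bound $p_{x,i} \leq e^{\epsilon_{\text{dp}}}\, p_{x',i}$, i.e. the likelihood ratio $p_{x,i}/p_{x',i}$ is at most $e^{\epsilon_{\text{dp}}}$. Abbreviating $c = \epsilon_{\text{dp}}/(4t)$, this ratio factorizes into a weight ratio $e^{c(|x'-i|-|x-i|)}$ and a partition-function ratio $\left(\sum_{j} e^{-c|x'-j|}\right)\big/\left(\sum_{j} e^{-c|x-j|}\right)$, and I would bound the two factors independently.

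For the weight ratio, the reverse triangle inequality gives $\big| |x'-i|-|x-i| \big| \leq |x-x'| \leq t$, so this factor is at most $e^{ct} = e^{\epsilon_{\text{dp}}/4}$. The partition-function ratio is the only part that takes real care, because in a local mechanism the normalizer depends on the secret value and therefore does not cancel. I would handle it term by term: for every $j$, the triangle inequality yields $|x-j|-|x'-j| \leq |x-x'| \leq t$, hence $e^{-c|x'-j|} \leq e^{ct}\, e^{-c|x-j|}$; summing over $j$ shows the numerator sum is at most $e^{ct} = e^{\epsilon_{\text{dp}}/4}$ times the denominator sum. Multiplying the two bounds yields $p_{x,i}/p_{x',i} \leq e^{\epsilon_{\text{dp}}/2} \leq e^{\epsilon_{\text{dp}}}$, which proves the theorem with room to spare.

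The main obstacle is the partition-function ratio: unlike post-processing or global-sensitivity arguments where the normalizer is fixed, here it is data-dependent, so the crux is showing it is stable under the $x \to x'$ shift via the uniform per-term factor $e^{ct}$. The factor-of-two gap between the achieved $\epsilon_{\text{dp}}/2$ and the claimed $\epsilon_{\text{dp}}$ is the familiar ``numerator and denominator each contribute a factor'' phenomenon of the exponential mechanism. I read the $4t$ normalization (rather than $2t$) as deliberately building in this slack: each of the independently sampled left and right boundaries is in fact $(\epsilon_{\text{dp}}/2)$-dist indistinguishable, so composing the two and post-processing them into the final range $[l,r]$ (via \Cref{theo:post-process}) yields exactly the target $\epsilon_{\text{dp}}$ guarantee for the whole obfuscated interval.
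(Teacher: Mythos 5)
Your proof is correct and follows essentially the same route as the paper: factor the likelihood ratio into the weight term and the data-dependent normalizer, bound each by $e^{\epsilon_{\text{dp}}/4}$ via the triangle inequality (term by term for the partition sum), obtain $e^{\epsilon_{\text{dp}}/2}$ per boundary, and compose the two independently sampled boundaries to reach $\epsilon_{\text{dp}}$. Your closing paragraph's reading of the $4t$ normalization as budgeting $\epsilon_{\text{dp}}/2$ for each of the two boundaries is exactly the paper's accounting (and is the correct framing, superseding your earlier ``proves the theorem with room to spare'' remark, since the adversary observes both $l$ and $r$).
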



\begin{proof}
In our scenario, the observations $O$ visible to the adversary consist of the boundaries (i.e., the outputs $l$ and $r$). We show that our algorithm provides a $\frac{\epsilon_{\text{dp}}}{2}$-dist indistinguishability privacy guarantee for each boundary. 
Let $o$ be one of the two boundaries $l$ or $r$, and $\mathbb{D}$ be the set of potential new values that it is obfuscated to (i.e., $[0, \hat{y}-\varepsilon_{\text{data}}]$ or $[\hat{y}+\varepsilon_{\text{data}}, n)$).
\begin{equation}
\begin{aligned}
\frac{\Pr[O = o | x]}{\Pr[O = o | x']}&= e^{\frac{\epsilon_{\text{dp}}}{4t}(|x'-o|-|x-o|)} \cdot \frac{\sum_{j \in \mathbb{D}} e^{-|x'-j|\cdot\epsilon_{\text{dp}}/4t}}{\sum_{j \in \mathbb{D}} e^{-|x-j|\cdot\epsilon_{\text{dp}}/4t}}\\
\end{aligned}
\end{equation}
By applying the triangle inequality $|x'-o|=|x'-x+x-o|\leq |x'-x|+|x-o|$, we can get:
\begin{equation}
e^{\frac{\epsilon_{\text{dp}}}{4t}(|x'-o|-|x-o|)} \leq e^{\frac{\epsilon_{\text{dp}}}{4t}|x'-x|}\leq e^{\frac{\epsilon_{\text{dp}} \cdot t}{4t}}
\end{equation}
Similarly, $-|x'-j| \leq |x'-x|-|x-j|$, and we get:
\begin{equation}
\begin{aligned}
\sum_{j\in \mathbb{D}} e^{-|x'-j|\cdot\epsilon_{\text{dp}}/4t} &\leq \sum_{j\in \mathbb{D}} (e^{|x'-x|\cdot\epsilon_{\text{dp}}/4t}\cdot e^{-|x-j|\cdot\epsilon_{\text{dp}}/4t})\\
&\leq e^{|x'-x|\cdot\epsilon_{\text{dp}}/4t}\cdot \sum_{j\in \mathbb{D}}e^{-|x-j|\cdot\epsilon_{\text{dp}}/4t}\\
&\leq e^{\frac{\epsilon_{\text{dp}} \cdot t}{4t}}\cdot \sum_{j\in \mathbb{D}}e^{-|x-j|\cdot\epsilon_{\text{dp}}/4t}
\end{aligned}
\end{equation}
Combining both terms, we have the overall bound:
\begin{equation}
\begin{aligned}
\frac{\Pr[O = o | x]}{\Pr[O = o | x']}&\leq e^{\frac{\epsilon_{\text{dp}} \cdot t}{4t}} \cdot e^{\frac{\epsilon_{\text{dp}} \cdot t}{4t}} \leq e^{\frac{\epsilon_{\text{dp}}}{2}}
\end{aligned}
\end{equation}
Since the exponential mechanism is used for both boundaries, the overall algorithm satisfies the $\epsilon_{\text{dp}}$-dist indistinguishability.
\end{proof}

The ratio of the distance $t$ and the privacy parameter $\epsilon_{\text{dp}}$ controls the size of the obfuscated range. For smaller $\frac{t}{\epsilon_{\text{dp}}}$, the obfuscated range is probabilistically reduced, which provides more relaxed privacy but saves the computational and communication costs.

It is worth noting that the exponential mechanism involves calculating probabilities for all points in $\mathbb{D}$ and then sampling from them, which is costly. 
To address this issue, we simulate the exponential mechanism by employing the widely used discrete Laplace mechanism~\cite{li2022opboost}, as defined in~\Cref{def:lap}, with $\lambda = 2t/\epsilon_{\text{dp}}$.
The process for obfuscated range generation using the discrete Laplace mechanism is outlined in~\Cref{alg:lap}.

\begin{algorithm}[!t]
\caption{Lap$_{\mathbb{Z}}$-based Obfuscated Range Generation}\label{alg:lap}
\begin{algorithmic}[1]
\Require $\hat{y}$, $\varepsilon_{\text{data}}$, dataset size $n$, privacy parameter $\epsilon_{\text{dp}} > 0$, $t>0$
\Ensure obfuscated range $[l, r]$
\Function{ModularLap$_{\mathbb{Z}}$}{$ \mathbb{D}, \epsilon_{\text{dp}}, t$}
    \State $x \leftarrow Lap_\mathbb{Z}(\frac{2t}{\epsilon_{\text{dp}}})$
    \State $x\leftarrow x\% \left|\mathbb{D}\right|$
    \State \Return $\mathbb{D}[x]$
\EndFunction
\State $\mathbb{D}_l\leftarrow [\hat{y}-\varepsilon_{\text{data}}, \cdots, 1, 0, n-1,n-2, \cdots, \hat{y}+\varepsilon_{\text{data}}]$ 
\State $\mathbb{D}_r\leftarrow [\hat{y}+\varepsilon_{\text{data}}, \cdots, n-2, n-1, 0,1,\cdots, \hat{y}-\varepsilon_{\text{data}}]$
\State $l\leftarrow \text{ModularLap}_{\mathbb{Z}}(\mathbb{D}_l, \epsilon_{\text{dp}}, t)$
\State $r\leftarrow \text{ModularLap}_{\mathbb{Z}}(\mathbb{D}_r, \epsilon_{\text{dp}}, t)$
\If{$l\leq \hat{y}-\varepsilon_{\text{data}}\leq \hat{y}+\varepsilon_{\text{data}}\leq r$}
    \State \Return $[l, r]$
\ElsIf{$r<l\leq \hat{y}-\varepsilon_{\text{data}}$ \textbf{or} $\hat{y}+\varepsilon_{\text{data}}\leq r<l$}
    \State \Return $[l,n-1] \cup [0, r]$
\Else
    \State \Return $[0, n-1]$
\EndIf
\end{algorithmic}
\end{algorithm}

\begin{definition}[Discrete Laplace Distribution]\label{def:lap}
The discrete Laplace distribution with a scale parameter $\lambda$ is denoted as $\mathrm{Lap}_\mathbb{Z}(\lambda)$, 
where $\mathbb{Z}$ represent the set of integers. If a random variable $X$ follows $\text{Lap}_\mathbb{Z}(\lambda)$, its probability distribution is defined as: 
\begin{equation}
\forall x\in \mathbb{Z},\quad \textnormal{Pr}[X=x]=\frac{e^{1/\lambda}-1}{e^{1/\lambda}+1}\cdot e^{-\left|x\right| / \lambda}
\end{equation}
\end{definition}

We then prove that adding noise according to a discrete Laplace distribution satisfies distance-based indistinguishability.
\begin{theorem}\label{theo:lap_dp}
\textbf{Lap$_{\mathbb{Z}}$-based Obfuscated Range Generation} provides $\epsilon_{\text{dp}}$-dist indistinguishability privacy guarantee for any pair of values $x_1, x_2$, where $\left|x_1-x_2\right| \leq t$, and $t,\epsilon_{\text{dp}} > 0$.
\end{theorem}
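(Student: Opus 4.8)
The plan is to mirror the structure of the proof of \Cref{theo:ldp}: reduce the two-sided guarantee to a per-boundary analysis, show each boundary alone achieves $\frac{\epsilon_{\text{dp}}}{2}$-dist indistinguishability, and then compose. The adversary observes only the two sampled boundaries $l$ and $r$, which are produced by two independent invocations of \textsc{ModularLap}$_{\mathbb{Z}}$. Hence the joint likelihood ratio factorizes, and multiplying two per-boundary ratios of $e^{\epsilon_{\text{dp}}/2}$ yields the target bound $e^{\epsilon_{\text{dp}}}$, exactly as in \Cref{theo:ldp}. The range actually returned by \Cref{alg:lap}---whether $[l,r]$, the wrapped $[l,n-1]\cup[0,r]$, or the degenerate $[0,n-1]$---is a deterministic function of the pair $(l,r)$, so by the post-processing property (\Cref{theo:post-process}) it inherits whatever guarantee $(l,r)$ enjoys; I would invoke this to discharge the conditional branches at the end of the algorithm.

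For a single boundary, fix two candidate true values $x_1,x_2$ with $|x_1-x_2|\le t$ and a possible output $o$. The key step is to express the probability that \textsc{ModularLap}$_{\mathbb{Z}}$ returns $o$ through the discrete Laplace density of \Cref{def:lap}. Writing $p(\cdot)$ for that density with $\lambda=2t/\epsilon_{\text{dp}}$ and $D=|\mathbb{D}|$ for the size of the circular domain, I observe that index $0$ of $\mathbb{D}$ corresponds to the true value, so $o$ is output exactly when the sampled integer falls in a fixed residue class modulo $D$. Thus $\Pr[\,\text{out}=o\mid x\,]=\sum_{k\in\mathbb{Z}} p(m_x+kD)$, where $m_x$ is the index of $o$ in the domain built around $x$, and the two indices obey $m_{x_1}\equiv m_{x_2}+(x_1-x_2)\ (\mathrm{mod}\ D)$.

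The heart of the argument is to bound the ratio $\frac{\sum_k p(m_{x_1}+kD)}{\sum_k p(m_{x_2}+kD)}$. I would pair the two sums term by term via the shift $\delta=x_1-x_2$ and apply the triangle inequality inside each exponent, just as the exponential-mechanism proof bounds $|x'-o|-|x-o|$: since $|m_{x_2}+\delta+kD|-|m_{x_2}+kD|\le|\delta|$ for every $k$, the normalizing constant of the Laplace density cancels and each term satisfies $p(m_{x_2}+\delta+kD)\le e^{|\delta|/\lambda}\,p(m_{x_2}+kD)$. Because this bound is uniform in $k$, it survives the summation, giving a ratio of at most $e^{|\delta|/\lambda}\le e^{t/\lambda}=e^{\epsilon_{\text{dp}}/2}$ after substituting $\lambda=2t/\epsilon_{\text{dp}}$. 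Composing the two boundaries and applying post-processing then completes the proof.

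I expect the wrap-around (modular) step, not the Laplace algebra, to be the main obstacle, since the algebra essentially repeats the calculation in \Cref{theo:ldp}. Two points need care. First, I must check that $x\bmod D$ collapses the event $\{\text{out}=o\}$ into a single coset, so the output probability is the clean sum $\sum_k p(\cdot+kD)$, and that the offset $\delta$ between the two query-dependent domains is the same constant for every $k$; this is precisely what lets the term-by-term triangle-inequality bound pass through the infinite sum unchanged. Second, because the reachable support of each boundary excludes that query's own predicted interior, I must restrict the comparison to outputs reachable under both $x_1$ and $x_2$, so that no likelihood ratio becomes ill-defined. Once these are settled, each wrapped sum behaves like a single Laplace term and the remainder follows \Cref{theo:ldp} almost verbatim.
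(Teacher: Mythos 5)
Your proposal is correct and follows essentially the same route as the paper: a per-boundary analysis showing each sampled boundary satisfies $\frac{\epsilon_{\text{dp}}}{2}$-dist indistinguishability via the triangle inequality on the discrete Laplace exponent (the normalizing constant cancels), followed by composition over the two independent boundaries to reach $e^{\epsilon_{\text{dp}}}$. In fact you are more careful than the paper's two-line proof, which bounds the ratio $\Pr[N_1=o-l_1]/\Pr[N_2=o-l_2]$ directly and silently ignores both the modular folding (your coset sums, which do pass through the term-by-term bound since it is uniform in $k$) and the mismatch of reachable supports that you correctly flag as the delicate point.
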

\begin{proof}
This algorithm provides a $\frac{\epsilon_{\text{dp}}}{2}$-dist indistinguishability privacy guarantee for each boundary (i.e., $\hat{y}-\varepsilon_{\text{data}}$, $\hat{y}+\varepsilon_{\text{data}}$). 
Let $l_1$ and $l_2$ be the left boundaries of $x_1$ and $x_2$. 
The probability ratio of $l_1$ and $l_2$ being randomized to the same output value $o$ is: 
\begin{equation}
\begin{aligned}
\frac{\Pr[l_1+N_1]}{\Pr[l_2+N_2]}&=\frac{\Pr[N_1=o-l_1]}{\Pr[N_2=o-l_2]} = \frac{e^{-\left|o-l_1\right|/\lambda}}{e^{-\left|o-l_2\right|/\lambda}} \leq e^{\frac{\epsilon_{\text{dp}}}{2}} \\
\end{aligned}
\end{equation}
where $N_1$ and $N_2$ follow $\text{Lap}_\mathbb{Z}(\lambda)$.
Since this mechanism is used for both boundaries, the overall algorithm satisfies the $\epsilon_{\text{dp}}$-dist indistinguishability: $\epsilon_{\text{overall}} = \epsilon_l + \epsilon_r=\epsilon_{\text{dp}}/2 + \epsilon_{\text{dp}}/2=\epsilon_{\text{dp}}$.
\footnote{Note that our proof now demonstrates that distance-based indistinguishability holds for two boundaries derived from the predicted positions ($\hat{y}$). For practical use, the user should set $t' = t - 2\varepsilon_{\text{data}}$ as the final distance. Since $\left|\hat{y}_i - \hat{y}_j\right| \leq \left|i - j\right| + 2\varepsilon_{\text{data}} \leq t + 2\varepsilon_{\text{data}}$, this guarantees that the proposed mechanism preserves distance-based indistinguishability when applied to queries.}
\end{proof}

\Cref{theo:lap_dp} guarantees that for any two queries, $q_i$ and $q_j$, with distance less than $t$, the probability of distinguishing between them is bounded by $e^{\epsilon_{\text{dp}}}$. 
This covers two cases: when $q_i \neq q_j$ and when the same query is submitted repeatedly (i.e., $q_i = q_j$). In both cases, the adversary will be unable to distinguish the queries.

Since the negative numbers sampled from $\mathrm{Lap}_\mathbb{Z}(\lambda)$ are modulo positive, the expected amount of noise added to one boundary is $\mathbb{E}[X \mid X > 0] = \frac{1}{1 - e^{-1/\lambda}} \approx 2t/\epsilon_\text{dp}$. 
Accounting for both boundaries, the expected length of the obfuscated range becomes $4t/\epsilon_\text{dp} + 2\varepsilon_{\text{data}} + 1$, where the last two terms specify the output range from the \pgm. This length is capped at the total number of key-value pairs, i.e., $n$. The total number of fake queries equals this length minus one.

\paragraph{Discussion}
Below, we analyze why B+tree requires more noise compared to PGM-indexes. 
The B+tree index only needs to transfer non-leaf nodes to the client, who uses it to obtain the ID of the leaf node (i.e., page ID) containing the querying key, as well as the IDs of the key-value pairs at the left and right boundaries of this page ($l_{\text{B+tree}}$ and $r_{\text{B+tree}}$). 
Note that noise must be added at the page-level granularity. Otherwise, the adversary may infer that the querying key is not located in any incomplete page within the range (e.g., the first or last page), leaking information.

Assume $m$ is the number of key-value pairs per page, and key-value pairs are evenly distributed within each page. Thus $\text{page ID} =\frac{\text{item ID}}{m}$.
The distance in the B+tree can be defined as $t'=\lceil \frac{t}{m} \rceil$. The boundaries $l_{\text{B+tree}}$ and $r_{\text{B+tree}}$ are then extended using the same noise mechanism ($\mathrm{Lap}_\mathbb{Z}(\lambda')$), where $\lambda' = 2t' / \epsilon_\text{dp}$. Consequently, the expected length of the obfuscated range becomes $\frac{4m}{\epsilon_\text{dp}} \cdot \lceil \frac{t}{m} \rceil + m$.

Compared to the \pgm, the B+tree generally requires more noise. For example, the leaf nodes of a B+tree typically occupy 4~KB, containing 256 key-value pairs of 16 bytes each. When using default parameters ($\varepsilon_{\text{data}} = 64$, $\epsilon_{\text{dp}} = 2^{-6}$) and a user-selected $t = 100$, the expected obfuscation range length for a B+tree is 65,792, while for a \pgm, it is 25,729. When $t = 10,000$, the expected lengths increase to 2,621,696 for a B+tree and 2,560,129 for a \pgm, respectively, indicating that the \pgm requires less noise overall, especially for fine-grained security levels. More empirical results are provided in~\Cref{sec:experiment_pgm}.
\section{Key-Value Retrieval}\label{sec:retrieval}
In this section, we describe our server-side adaptive key-value retrieval module, including two schemes: plaintext download and variable-range PIR. Additionally, we present a lightweight cost model to help make fast and efficient decisions between these two schemes based on query characteristics and system configurations.

\subsection{Plaintext Download}\label{sec:download}
Plaintext download is a straightforward method for retrieving key-value pairs, where the server directly sends the key-value pairs within the obfuscated range to the client, which performs the searches locally. The client only needs to provide the boundaries $l$ and $r$. 
This approach satisfies the security requirements, as the server only sees the boundaries that have been obfuscated to ensure distance-based indistinguishability, and remains unaware of the exact key the client intends to retrieve. 

The main advantage of plaintext download is server-side efficiency, as the server can immediately transmit the specified key-value pairs over the network without additional computation. 
However, this approach is bandwidth-intensive, which can lead to performance bottlenecks in low-bandwidth or high-traffic environments.
Besides, plaintext download can further reduce the data to be transferred through compression techniques~\cite{qiao2024blitz,liu2024leco,AbadiMF2006integrating,DBLP:journals/pvldb/ZengHSPMZ23}. Although we do not apply compression in our experiments, it can be integrated as an optional optimization.

\subsection{Variable-range PIR}\label{sec:variablePIR}
To handle low-bandwidth scenarios, we propose a novel variable-range PIR scheme. 
This enhanced scheme allows performing ``tiny PIR'' on a specific range of each query without re-preprocessing. 
Our design significantly improves server-side computational efficiency when querying large datasets with relaxed privacy levels.

\subsubsection{Underlying Cryptography.}
We leverage the SEAL FHE library, which is based on the BFV FHE cryptosystem, to perform ``tiny PIR''.
This method encodes several data points into a single plaintext, enabling vectorized homomorphic operations. Furthermore, each retrieval returns the key-value pairs packed within the same plaintext, facilitating simultaneous retrieval of multiple key-value pairs.
The number of key-value pairs (denoted as $M$) that a single plaintext can hold is determined by: $M =\frac{N}{\lceil kv_{\text{bits}} / \log_2 p \rceil}$, where the polynomial degree $N$ and the plaintext modulus $p$ are both internal parameters of the FHE scheme. The modulus $q$, which governs the noise capacity and the overall security, can be derived from $N$ and $p$. We use the recommended default values of $N = 4096$ and $\log_2 p = 20$. More details can be found in the original paper~\cite{fan2012somewhat,chen2017simple,angel2018pir}.

\begin{figure}[!t]
    \centering
    \includegraphics[width=1\linewidth]{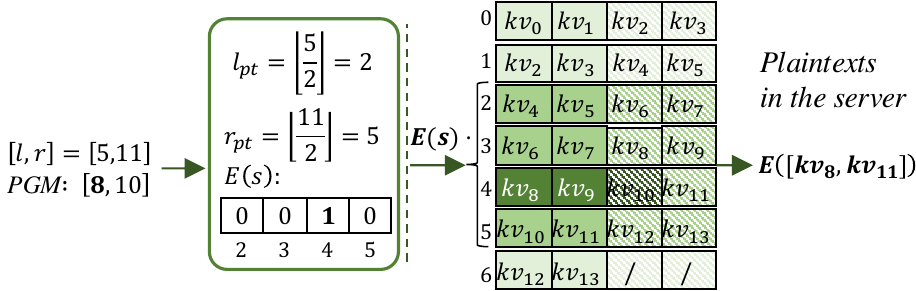}
    \caption{An Example of Data Encoding and Query Processing in Variable-Range PIR.}
    \label{fig:variablePIR}
\end{figure}

\subsubsection{PGM-Oriented Misaligned Encoding}
One specific challenge in our framework is that we must return the key-value pairs in the range $[\hat{y}-\varepsilon_{\text{data}},\hat{y}+\varepsilon_{\text{data}}]$ as predicted by the \pgm, rather than a single record as in classical PIR schemes. 
Without a careful design, the key-value pairs in this range may cross two or multiple plaintexts, all of which need to be transmitted to the client. 
For example, if $M=4$ and we need to fetch $[kv_6, kv_9]$, two plaintexts corresponding to $[kv_4, kv_7]$ and $[kv_8, kv_{11}]$ must be returned. 

We introduce a PGM-oriented misaligned encoding scheme. As illustrated in~\Cref{fig:variablePIR}, only $M/2$ key-value pairs are encoded into each plaintext, with the second half containing a duplicate of the first $M/2$ pairs from the next plaintext. 
This ensures that, as long as the \pgm satisfies $2\varepsilon_{\text{data}}+1 < \frac{M}{2}+2$, 
all required key-value pairs can be retrieved in a single plaintext.
This condition is easily satisfied with typical configurations, e.g., usually $\varepsilon_{\text{data}} = 64$ and $M = \frac{4096}{\lceil 128/20 \rceil}=585$.
Furthermore, this misaligned encoding can be generalized to support longer predicted ranges by increasing the overlap between adjacent plaintexts. With $m$ key-value pairs overlapped, the condition would become $ 2\varepsilon_{\text{data}} + 1 < m + 2 $.

\subsubsection{Query Processing}
Recall that $[\hat{y}-\varepsilon_{\text{data}}, \hat{y}+\varepsilon_{\text{data}}]$ is enlarged to $[l,r]$ to ensure distance-based indistinguishability. While only the key-value pairs in the first (smaller) range are returned to the user, all the pairs in the second (larger) range must be uniformly accessed and processed to obscure sensitive access patterns. 
The size of this larger range can vary significantly with different user-specified security requirements. 
This requires that the underlying PIR scheme should encode plaintexts independently (e.g., SealPIR~\cite{angel2018pir} and OnionPIR~\cite{mughees2021onionpir}) without pre-computation involving the client. In this paper, we use SealPIR. But \framework can incorporate more advanced PIR interfaces in the future.

\Cref{fig:variablePIR} illustrates the steps in processing a lookup query with our variable-range PIR based on the misaligned encoding. 
Both the predicted and obfuscated ranges are first converted to the plaintext IDs at the client side using the encoding information (e.g., $M$) obtained during initialization. 
Here with $M=4$, $l=5$, and $r=11$, the plaintext IDs are $l_{\text{pt}}=2$ and $r_{\text{pt}}=5$, involving four plaintexts in the computation. 
The predicted range is encoded into a one-hot vector $s$ of length $(r_{\text{pt}} - l_{\text{pt}} + 1)$, where 1 indicates the target plaintext and 0 represents the others. 
In this case, the third element of $s$ is set to 1, corresponding to the 4th plaintext. 
The client sends $l_{\text{pt}}$, $r_{\text{pt}}$, and the homomorphic encryption of $s$, $E(s)$, to the server. 
Note that the server performs homomorphic operations only on the obfuscated range. Specifically,
a homomorphic inner product is performed between $E(s)$ and the plaintexts in $[l_{\text{pt}}, r_{\text{pt}}]$, producing an encrypted output containing the retrieved key-value pairs (i.e., $[kv_8, kv_{11}]$), which is then returned to the client.

Our variable-range PIR enables relaxed lookup queries with flexible security requirements without requiring re-encoding, improving utility and scalability. Additionally, misaligned encoding ensures efficient retrieval of key-value pairs in a single PIR call, reducing communication overheads. 
However, compared to plaintext downloads, it increases server-side computation due to the homomorphic encryption operations required for PIR requests.

\subsection{Scheme Selection}
We employ a cost model to efficiently choose between the two schemes. The model calculates the latency for each scheme using the following formula, as latency is the primary concern for clients: 
\begin{equation}
C = \frac{\textit{Comm}}{\text{bandwidth}} + C_{\text{compute}}
\end{equation}
where $\textit{Comm}$ represents the total communication amount of the scheme, and $C_{\text{compute}}$ denotes the server-side computation time. This cost model is broadly applicable to many retrieval schemes.

For plaintext downloads, the communication cost is $w \times kv_{\text{bits}}$, where $w$ is the length of the obfuscated range, and the computation time is negligible. In contrast, for variable-range PIR, the communication cost includes the encrypted one-hot vector and the ciphertext returned by the server. Its computation time is $w\times C_{\text{FHE}}$, where $ C_{\text{FHE}} $ represents the time needed to perform a homomorphic computation operation on a single plaintext, which increases with the number of plaintexts involved.

Both bandwidth and $ C_{\text{FHE}} $ can be pre-determined, facilitating quick decisions during the online phase. By leveraging this cost model, we can efficiently combine different retrieval schemes, enabling faster privacy-preserving retrieval for users.

\section{Supporting Updates}\label{sec:update}
As discussed in \Cref{sec:motiv}, in our scenarios, database updates are performed by the server, not by the clients. This setting aligns with existing PIR-based designs~\cite{davidson2023frodopir,henzinger2023one,celi2024call,patel2023don}, in which any modification necessitates re-initializing the entire PIR scheme, including re-running the offline phase. In \framework, we support two strategies to handle updates, i.e., real-time updates for \textit{values} and periodic batch updates for \textit{keys}. We also leverage multi-version control to synchronize server updates with clients.

For value updates, the plaintext of the updated key-value pair is retrieved from the key-value store, modified in real-time, and then stored back. Since the \pgm is constructed based on keys only, it remains unchanged during value updates. Throughout this process, client lookup queries continue to operate as usual. If the obfuscated range of a lookup query overlaps with the updated plaintext, read-write locks are employed to ensure consistency.

For key updates, we employ periodic batch updates via a copy-on-write model. In this approach, the server initiates a new cloud instance or background thread to merge the existing key-value store with new key-value pairs, performing necessary inserts, deletes, and re-encoding operations. Only key-value pairs that need to be moved to maintain order are re-encoded. For example, if an insertion occurs at position 7 in a dataset of 10 key-value pairs, the plaintexts of the first 6 pairs remain unchanged and do not require re-encoding. Once the merge is complete, the server constructs a new \pgm for the updated dataset. This process is done asynchronously (or offline) and does not disturb services on older versions. After the update, the server broadcasts the new \pgm and version information to all connected clients, or clients can request it on demand. If a client queries using the old version, the server returns results from that version while notifying the client of the version inconsistency and providing the latest \pgm. The client can then either use the old version's results or re-execute the query using the new index. In our experiments, clients re-execute queries with the latest version. Once all clients have confirmed that they have switched to the new version, the old key-value store is deleted.
\section{Evaluation}\label{sec:experiments}
In this section, we conduct a comprehensive evaluation of \framework using experiments on multiple datasets. \Cref{sec:setup} outlines the experimental setup, while~\Cref{sec:end2end} provides an in-depth analysis of end-to-end performance. \Cref{sec:impact_params} presents the explorations of the impact of various factors. Finally, \Cref{sec:eva_update} evaluates a realistic scenario involving Redis and update operations.


\subsection{Experimental Setup}\label{sec:setup}

We conduct experiments on a machine with two Intel$^\circledR$ Xeon$^\circledR$ Platinum 8474C CPUs (2.10 GHz, 48 cores per socket) and 512 GB of RAM. 
All end-to-end evaluations in~\Cref{sec:end2end} are performed using 8 threads, and the remaining experiments use a single thread. 
We select the commonly used WAN setting (50 Mbps bandwidth with 30 ms roundtrip latency) as the default configuration~\cite{mohassel2018aby3,keller2016mascot,keller2018overdrive}. We provide two additional bandwidth setups in~\Cref{sec:bandwidth}, 100 Mbps and 10 Mbps, all with a roundtrip latency of 30 ms.

\para{Datasets and Workloads.}
We use the real-world Open Street Map Coordinates (OSMC) dataset from the SOSD benchmark~\cite{marcus2020benchmarking}. It contains 200 million key-value pairs representing real geographic positions. In contrast, existing PIRs~\cite{ahmad2022pantheon,celi2024call,patel2023don} often use randomly generated datasets with at most a few million data points and thus fail to fully capture the complexity of large-scale, real-world applications. Both keys and values are 8-byte unsigned integers. This uniform length, consistent with PIR schemes, prevents the server from inferring query keys based on lengths. We use the entire 200M key-value pairs in \Cref{sec:end2end}. We randomly select 100 keys (similar to \cite{celi2024call}) from the dataset for client lookups during the online phase. We use a subset of the datasets in other experiments.

\para{Metrics.}
Our evaluation includes four main metrics: offline/online communication volume and offline/online execution time. Communication volume refers to the total amount of data transferred between the server and the client at each phase, measured in megabytes (MB). The execution time includes both the client/server computation time and the network data transmission time, providing an end-to-end performance measure.

\para{\framework and Baselines.}
We implement \framework and two keyword-PIR baselines, Chalamet~\cite{celi2024call} and Pantheon~\cite{ahmad2022pantheon}, in Java. 
Since the original Rust implementation of Chalamet does not include network communication, it cannot be evaluated directly under our client-server experiment setting. To enable a fair end-to-end evaluation, we reimplemented Chalamet in Java\footnote{In our tests with a $2^{20} \times 256$ byte dataset, we observe that network communication, along with serialization and deserialization, account for more than 72\% of the total execution time. Since the computational functionality of the Rust code is not the bottleneck, rewriting it in Java has minimal impact on the overall performance.} based on its open-source implementation~\cite{chalamet}.
Both Chalamet and Pantheon are fully secure schemes, while \framework can support different security granularities. 
As discussed in~\Cref{sec:privacy},
we keep $\epsilon_{\text{dp}} = 2^{-6}$, and specify three representative security levels as high ($t=1,000,000$), medium ($t=10,000$), and low ($t=100$). 
We build the PGM-indexes using the complete datasets of the respective size (e.g., 200 million key-value pairs in the end-to-end evaluation) with parameters set to default values ($\varepsilon_{\text{model}} = 4, \varepsilon_{\text{data}} = 64$).
To support variable-range PIR in \framework, we employ Java Native Interface (JNI) technology to call the Microsoft SEAL library (v4.1)~\cite{sealcrypto} and use default settings for the BFV scheme. 
Our source code is publicly available~\cite{femur}.

\subsection{End-to-End Evaluation}\label{sec:end2end}
We perform an end-to-end evaluation using the 200M-record OSMC dataset. In this experiment, the client continuously issues 100 queries in a pipeline manner to the server. We compare the total execution time (from the generation of the first query to the client receiving all responses) of \framework against the baselines. \framework is evaluated at full security and eight different security levels with varying $t$ values of 1M (high), 0.5M, 0.1M, 50K, 10K (medium), 5K, 1K, and 100 (low), which correspond to the average number of data points in \framework's computations of 256M\footnote{Note that this is a theoretical expectation. Since the OSMC dataset contains only 200M keys, any obfuscated range exceeding 200M will be capped at 200M.}, 128M, 25.6M, 12.8M, 2.56M, 1.28M, 256K, and 25.6K data points, respectively (plus an additional 129 data points generated by the \pgm for each case).

\begin{figure}[!t]
    \centering
    \includegraphics[width=1\linewidth]{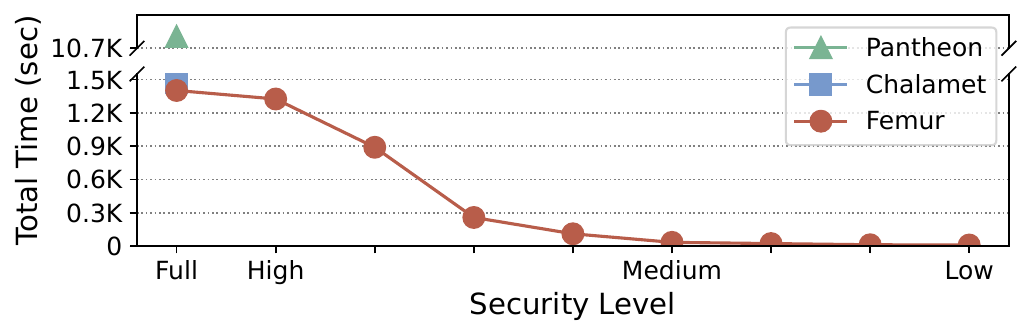}
    \caption{Total Online Execution Time for 100 Queries\textnormal{ -- \framework uses 8 relaxed security levels besides full security. Pantheon and Chalamet encountered out-of-memory issues on 8 threads, so their results are ideally scaled from single-thread time. Pantheon failed to complete on this large dataset after running 24 hours using a single thread, so we mark its (lower-bound) time as 24/8 = 3 hours.}}
    \label{fig:end2end}
\end{figure}

We present the online execution time for \framework and the two baselines in~\Cref{fig:end2end}, and their offline execution time in \Cref{tab:end2end_offline}. The results clearly show that relaxing the security significantly enhances the online runtime, with speedups of 1.11$\times$, 1.65$\times$, 5.71$\times$, 13.4$\times$, 44.6$\times$, 66.8$\times$, 131.1$\times$, and 163.9$\times$ over Chalamet, which provides full security. 
Even at the same full security level, \framework is slightly faster than Chalamet (1.05$\times$) and significantly outperforms Pantheon (over 7.71$\times$). 
\framework reduces the average execution time per query from 14 seconds under full security to 89.6 ms at a relaxed security level that guarantees indistinguishability among 100 neighboring queries. This demonstrates an effective trade-off between security and performance, enabling significant speedups when users are willing to relax their privacy requirements.

\begin{table}[!t]
    \centering
    \setlength{\tabcolsep}{1.2pt} 
    \renewcommand{\arraystretch}{1} 
    \small 
    \caption{Total Offline Time for 100 Queries (in seconds)\textnormal{ --  \framework only needs to be initialized once to support all security-level queries, so different $t$ values produce similar offline latencies.}}
    \label{tab:end2end_offline}
    \begin{tabular}{c c c c c c c c c c}
    \toprule
    \multirow{2}{*}[-0.5ex]{\textbf{Pantheon}}&\multirow{2}{*}[-0.5ex]{\textbf{Chalamet}}&\multicolumn{8}{c}{\textbf{\framework}} \\
    &&\multicolumn{8}{c}{With Relaxed Security Level (Various $t$)}\\
    \cmidrule(lr){1-2} \cmidrule(lr){3-10}
    \multicolumn{2}{c}{\footnotesize Full Security} &\footnotesize 1M &\footnotesize 500K &\footnotesize 100K &\footnotesize 50K &\footnotesize 10K  &\footnotesize 5K &\footnotesize 1K &\footnotesize 100 \\
    \midrule
    55.5& 16214.3& 697.5&697.2&703.2&697.3&698.2&694.6&702.1&696.2 \\
    \bottomrule
    \end{tabular}
\end{table}

Pantheon is extremely slow on large datasets, primarily due to its reliance on a slow homomorphic equality check during the online phase to locate the querying key. In contrast, \framework employs the \pgm for rapid key location, at the slight cost of sending a small amount of data during initialization. Even with the default value of $\varepsilon_{\text{data}} = 64$, which is relatively small, the size of the \pgm is only 5MB on the 200 million records.
While Chalamet performs close to \framework with full security level, it must transfer 858.5 MB of ciphertext results, which can significantly delay query responses in bandwidth-constrained scenarios. Unlike Pantheon and Chalamet, whose computation or transmission times grow with dataset size, \framework maintains a constant online transfer time, limited to two ciphertexts (for query and response), with computation time dependent solely on the user-specified security level.

Chalamet suffers from a significantly slower offline phase, requiring 4.5 hours to pre-process 200 million key-value pairs as shown in~\Cref{tab:end2end_offline}. This makes Chalamet impractical in scenarios where updates or inserts are needed, as even batch updates require re-processing all data points, leading to substantial delays. 
In contrast, \framework completes the offline phase in 11.5 minutes, significantly faster than Chalamet. 
Besides, our update strategies in~\Cref{sec:update} further reduce the re-initialization time by minimizing the number of key-value pairs that need to be re-encoded. Further evaluation of update support is presented in~\Cref{sec:eva_update}.
Note that to support the flexibility of different privacy levels in \framework, we do \emph{not} need to re-execute the initialization phase. 
Pantheon’s offline initialization time is short because it only requires the user to upload the two parameters for the underlying homomorphic encryption scheme. However, its online query processing time is significantly longer.

\subsection{Impact of Various Factors}\label{sec:impact_params}
Starting from this section, we focus on the performance breakdown of individual queries and the impact of various factors on performance.
Therefore, each query is run individually with a single thread, and the online execution time reflects the performance of a single query. Unless otherwise stated, other parameters remain the same as previously described.

\subsubsection{Impact of Dataset Sizes}\label{sec:datasize}
\begin{figure}[!t]
    \centering
    \includegraphics[width=1\linewidth]{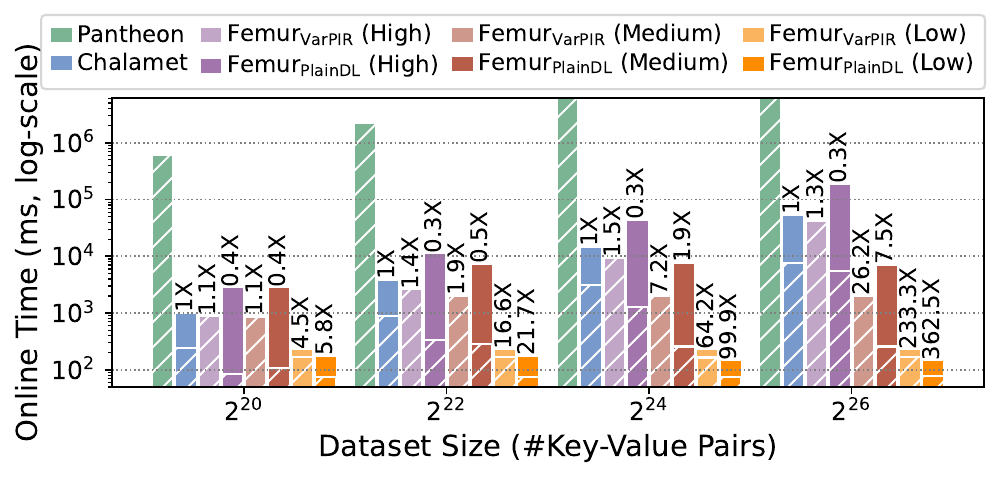}
    \caption{Online Execution Time for Each Query on Different Dataset Sizes\textnormal{ -- The number above each bar represents the speedup over Chalamet. Pantheon timed out on datasets of size $2^{24}$ and $2^{26}$. The shadowed area represents the server-side computation time, including the time required to serialize the data to be sent.}}
    \label{fig:size_online_time}
\end{figure}

In this section, we evaluate the performance of each scheme across various dataset sizes. Specifically, we conduct experiments on a subset of the OSMC dataset of sizes $2^{20}$, $2^{22}$, $2^{24}$, and $2^{26}$. 
We also test smaller datasets, which are not included here due to space limitations, and observed conclusions consistent with the results presented.
We select three security levels for \framework, corresponding to the expected length of obfuscated ranges of 256 million, 2.56 million, and 25600, to evaluate both \framework$_\text{PlainDL}$ and \framework$_\text{VarPIR}$ schemes. If the expected length exceeds the total number of data points in the dataset, the current scheme is the same as full security (i.e., high level for all, and medium level for $2^{20}$).

\begin{table}[!t]
    \centering
    \setlength{\tabcolsep}{1.2pt} 
    \renewcommand{\arraystretch}{1.1} 
    \small 
    \caption{Online Communication per Query (in MB)\textnormal{ -- including the query keys/parameters uploaded by the client and the query results downloaded from the server.}}
    \label{tab:size_online_comm}
    \begin{tabular}{c c c c c c c c c}
    \toprule
    \multirow{2}{*}[-0.5ex]{\textbf{\shortstack{Dataset\\Size}}} & \multicolumn{2}{c}{\shortstack{Baselines on\\Full Security}} & \multicolumn{2}{c}{\shortstack{\textbf{\framework} on\\High Security}} & \multicolumn{2}{c}{\shortstack{\textbf{\framework} on\\Medium Security}} & \multicolumn{2}{c}{\shortstack{\textbf{\framework} on\\Low Security}} \\
    \cmidrule(lr){2-3} \cmidrule(lr){4-5} \cmidrule(lr){6-7} \cmidrule(lr){8-9}
    & \scriptsize \textbf{Pantheon} & \scriptsize \textbf{Chalamet} & \scriptsize \textbf{VarPIR} & \scriptsize \textbf{PlainDL} & \scriptsize \textbf{VarPIR} & \scriptsize \textbf{PlainDL} & \scriptsize \textbf{VarPIR} & \scriptsize \textbf{PlainDL} \\
    \midrule
    $2^{20}$  & 3.5  & 4.5   & 0.43  & 16    & 0.43  & 16   & 0.43  & 0.39 \\
    $2^{22}$  & 3.5  & 18    & 0.43  & 64    & 0.43  & 39   & 0.43  & 0.39 \\
    $2^{24}$ & 3.5  & 72    & 0.43  & 256   & 0.43  & 39   & 0.43  & 0.39 \\
    $2^{26}$ & 3.5  & 288   & 0.43  & 1024  & 0.43  & 39   & 0.43  & 0.39 \\
    \bottomrule
    \end{tabular}
\end{table}
\Cref{fig:size_online_time} illustrates the online time for each method across various dataset sizes, 
which mainly consists of the server-side computation time and the communication time to transmit data between the two parties.
The online latency of Chalamet increases linearly with dataset size, rising from 1.02 seconds for $2^{20}$ key-value pairs to 53.7 seconds for $2^{26}$ key-value pairs. This linear growth is not only due to the increased computation but also due to the linear increase in online communication, as shown in ~\Cref{tab:size_online_comm}. Pantheon shows a significant computation time despite constant communication volumes (3.5 MB), with one query on $2^{22}$ entries taking an impractical 2,261 seconds. 
In contrast, \framework's online response time remains independent of dataset size at fixed security granularity. This is because the security requirements for the client typically do not escalate with the dataset size. 
Compared to Chalamet, \framework achieves a maximum speedup of 362.5$\times$ by relaxing security to ensure distance-based indistinguishability with $t=100$ and $\epsilon_{\text{dp}}=2^{-6}$. 
Furthermore, the online communication cost of \framework$_\text{VarPIR}$ is fixed at 0.43 MB (i.e., the size of a ciphertext), regardless of dataset size or security granularity. Similarly, the online communication cost of \framework$_\text{PlainDL}$ does not increase with dataset size for the same security levels, remaining at 0.39 MB for all low-security experiments. Note that for higher security levels in~\Cref{tab:size_online_comm}, the expected length exceeds the size of certain datasets, causing the online communication to degrade to the entire dataset size.

\begin{figure}[!t]
    \centering
    \includegraphics[width=1\linewidth]{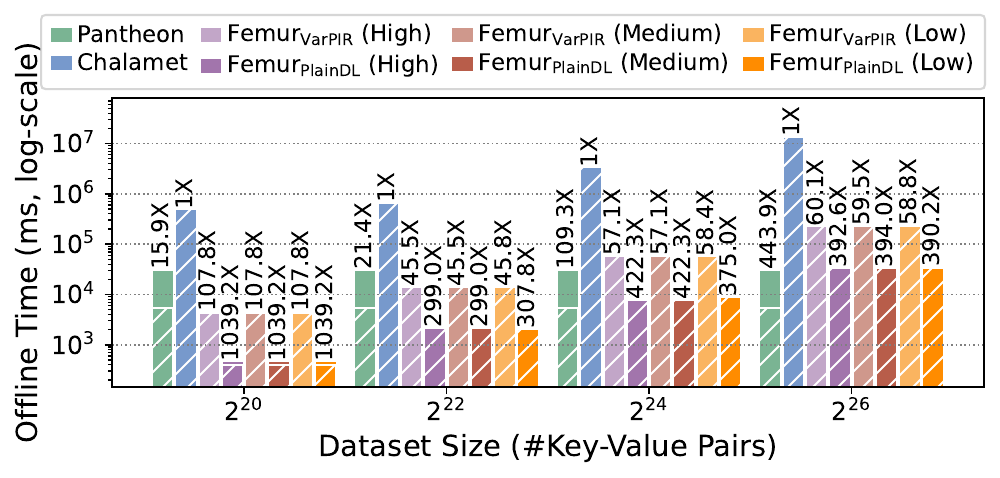}
    \caption{Offline Time for Each Query on Different Dataset Sizes\textnormal{ -- The number above each bar is the speedup ratio of each scheme compared to Chalamet.}}
    \label{fig:size_offline_time}
\end{figure}

Among our two schemes, \framework$_\text{PlainDL}$ scheme is more sensitive to bandwidth. It does not involve plaintext/ciphertext computations, so its query time is determined by the bandwidth and the byte size of the data. 
In contrast, \framework$_\text{VarPIR}$ is compute-intensive. Owing to our encoding design, the network transmission involves only uploading a query and downloading a result, whose time is minimized and remains constant at 63 ms. Consequently, its overall runtime is almost dominated by computation. 
In \framework, the cost model automatically selects the faster scheme between the two.

Particularly, Chalamet and \framework$_\text{VarPIR}$ (both High and Medium) on the dataset size with $2^{20}$ provide comparable online end-to-end latency, with the same level of full security.
However, in the offline phase, our \framework$_\text{VarPIR}$ offers a 107.8$\times$ speedup compared to Chalamet, as shown in~\Cref{fig:size_offline_time}, by only encoding the records into FHE plaintexts and efficiently building and transmitting the \pgm. Since the shorter offline phase of \framework$_\text{PlainDL}$ only accomplishes tasks that are a subset of \framework$_\text{VarPIR}$ (i.e., transmit \pgm), \framework's overall offline time is determined by \framework$_\text{VarPIR}$. Notably, our offline time remains constant across different security levels, as it depends solely on the size of the public key-value store. Clients download the \pgm once and can then use \framework for queries with different security requirements, whereas Chalamet and Pantheon rigidly support only fully secure queries.

\subsubsection{Impact of Bandwidth}\label{sec:bandwidth}
\Cref{fig:bw} illustrates the online latency across different bandwidths for the dataset of $2^{26}$ key-value pairs. We evaluate three security levels of \framework: high, medium, and low.
The high level aligns with the full security level of the baselines, as the expected length of the obfuscated range exceeds $2^{26}$. 
Bandwidth has less impact on \framework$_\text{VarPIR}$ compared to both \framework$_\text{PlainDL}$ and Chalamet. 
\framework$_\text{PlainDL}$ is bandwidth-intensive and becomes faster with higher bandwidth. 
Chalamet, which transfers 288 MB of data, is similarly limited by bandwidth. At 10 Mbps, it takes 250.5 seconds to respond to a query. 
In contrast, \framework$_\text{VarPIR}$ is able to transfer only one ciphertext due to its misaligned encoding, and it is bound by the FHE computations rather than data transfers. 

\begin{figure}[!t]
    \centering
    \includegraphics[width=1\linewidth]{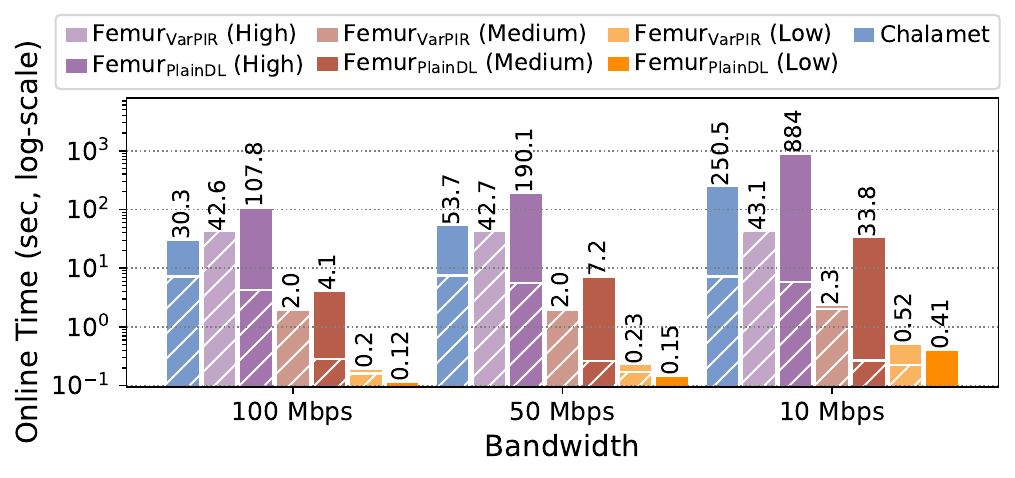}
    \caption{Online Execution Time per Query at Various Bandwidths\textnormal{ -- The number above each bar is the latency in seconds.}}
    \label{fig:bw}
\end{figure}

In summary, \framework, leveraging our cost model, adapts to various network environments and automatically selects the optimal solution between \framework$_\text{VarPIR}$ and \framework$_\text{PlainDL}$. \framework achieves response times ranging from 0.12 to 42.6 seconds, depending on the available bandwidth and the specified security level.

\subsubsection{Impact of Value Size}\label{sec:eva_largevalue}
\Cref{fig:large_value} illustrates the changes in both online and offline execution times for each scheme as the byte length of each value varies. The dataset consists of $2^{20}$ key-value pairs, with fixed 8-byte keys and the low security level for \framework.
\begin{figure}[!t]
    \centering
    \includegraphics[width=1\linewidth]{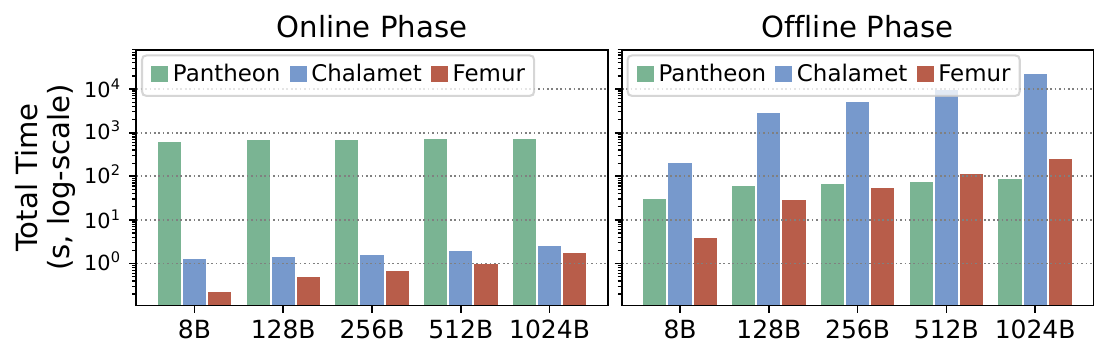}
    \caption{Online/Offline Execution Time (in seconds) at Various Value Size (in bytes)\textnormal{ -- Both graphs share the same y-axis.}}
    \label{fig:large_value}
\end{figure}

The online execution time of \framework increases as the value length grows, but it continues to show advantages. This increase is due to the larger total byte size, which leads to more encoded plaintexts and also slows down the plaintext encoding process during the offline phase. 
When the value length is 8 bytes, \framework uses the PlainDL scheme. As the values become larger, PlainDL's performance degrades rapidly with more transmitted data. Thus \framework switches to the VarPIR scheme at 128 byte values.

Chalamet’s online time increases slightly with the value length, but its offline time grows significantly, from 201.7 seconds to 6.02 hours. This is due to the rapid growth in computational cost caused by the larger key-value pair length, with most of the computation shifted to the offline phase. In contrast, Pantheon’s bottleneck lies in the equality check between the querying key and all keys. Since the key length remains constant, its online processing time increases only slightly, from 609 seconds to 729 seconds.

\subsubsection{Comparison of Client-Side Indexes}\label{sec:experiment_pgm}
To demonstrate the benefits of using PGM-indexes, we compare PGM-indexes and B+trees based on the length of the obfuscated ranges generated and the index size. 
The obfuscated range length directly impacts the data volume downloaded or processed on the server. The index size determines the amount of data required for clients to download from the server during the offline phase, which is also crucial especially in scenarios with update operations.
\begin{table}[!t]
    \centering
    \small
    \caption{Average Length of Obfuscated Ranges with Various Security Levels}
    \label{tab:btree_pgm_avg_range}
    \vspace{-0.5em}
    \begin{tabular}{c|cccc}
    \toprule
    \textbf{Distance} (\textbf{$t$}) &  \textbf{10}&\textbf{100} & \textbf{1000} & \textbf{10000} \\
    \midrule
    B+tree &  65541.46&65552.8 & 262088.2 & 2621408.6 \\
    \pgm &  2715.12&25740.2 & 256135.1 & 2559313.1 \\
    \midrule
    Ratio &  24.1394$\times$&2.5461$\times$ & 1.0233$\times$ & 1.0243$\times$ \\ 
    \bottomrule
    \end{tabular}
\end{table}

We first implement the noise generation algorithm described in~\Cref{sec:dp} on the B+tree and evaluate both indexes under various security levels. For each index, we process 10 million queries, converting querying keys into obfuscated ranges and measuring the average length of the resulting ranges. As shown in~\Cref{tab:btree_pgm_avg_range}, when the security requirements are low, the obfuscated ranges generated by the B+tree are 24.14$\times$ longer than those generated by the \pgm. Even as the security requirements increase, the \pgm consistently reduces approximately 60,000 injected noise. These results demonstrate that the \pgm, which operates at item-level granularity, is effective in minimizing injected noise while satisfying the same security requirements.

\begin{table}[!t]
    \centering
    \footnotesize 
    \caption{Index sizes of B+trees and PGM-indexes (in MiB)\textnormal{ -- The bolded numbers are the index sizes used in our previous experiments, indicating the network required to transmit the index.}}
    \label{tab:btree_pgm_size}
    \begin{tabular}{c c c c c c c c}
    \toprule
    \multirow{2}{*}[-0.5ex]{\textbf{Size}}&\multirow{2}{*}[-0.5ex]{\textbf{Dataset}}&\multirow{2}{*}[-0.5ex]{\textbf{B+tree}}&\multicolumn{5}{c}{\textbf{\pgm (With Various $\varepsilon_\text{data}$)}} \\
    \cmidrule(lr){4-8}
    &&&512&256&128&64&32\\
    \midrule
    \multirow{3}{*}{200 million}&Wiki&11.93 &0.11 & 0.17&0.29&0.58&1.31 \\
    &OSMC&11.93 & 0.65& 1.28& 2.55& \textbf{5.13}&10.3\\
    &Normal&11.93 &0.01& 0.01&0.01&0.01&0.02\\
    \midrule
    $2^{20}$& OSMC&0.13& 0.01 &0.01&0.02&\textbf{0.03}&0.07 \\
    $2^{22}$& OSMC&0.50& 0.01& 0.03& 0.06&\textbf{0.11}& 0.22\\
    $2^{24}$& OSMC&2.00 &0.06& 0.12& 0.24&\textbf{ 0.47}& 0.93\\
    $2^{26}$& OSMC&8.01& 0.25& 0.50& 0.98&\textbf{ 1.95}& 3.91 \\
    \bottomrule
    \end{tabular}
\end{table}
To evaluate the effects of data distribution and dataset size on index sizes, we construct both indexes using three datasets from the SOSD benchmark~\cite{marcus2020benchmarking}, each containing 200 million key-value pairs (two real-world datasets and one synthetic dataset following a normal distribution). Besides, we test index sizes on OSMC datasets of varying sizes. For the B+tree, only internal nodes are counted, and for the \pgm, we measure sizes across different values of $\varepsilon_{\text{data}}$.
As shown in~\Cref{tab:btree_pgm_size}, the \pgm is significantly smaller than the B+tree, even with small $\varepsilon_{\text{data}}$ error requirements and on the OSMC dataset that is less friendly to learned indexes.
Based on these results, we set the default value of $\varepsilon_\text{data}$ to 64 to balance the index size and the length of the predicted range. Meanwhile, $\varepsilon_\text{model}$ has a smaller impact on the index size, so we adopt the default value of 4 from the original paper~\cite{ferragina2020pgm}.

\subsection{Evaluation on Redis}\label{sec:eva_update}
To evaluate the performance and usability of \framework in real-world scenarios, we implemented \framework on Redis, a widely used key-value store, using Jedis (the Redis Java client)~\cite{jedis}. The server also continuously updates its key-value pairs while serving private queries from clients. 
During lookup and update operations, we use the Jedis interface to retrieve data within the specified range and write updated plaintext back to Redis. To ensure uninterrupted client queries during updates, we employ Multi-Version Concurrency Control (MVCC) to manage multiple Redis instances, as detailed in~\Cref{sec:update}.
We conduct experiments by executing 100 lookup queries on this Redis-based version of \framework under three scenarios: (1) no updates, (2) real-time updates for values with a 1-second interval between updates, and (3) periodic batch updates for keys with a 1-second delay between completing one update and initiating the next. The dataset size is $2^{24}$, and queries are performed at three security levels: low, medium, and high.
\begin{figure}[!t]
    \centering
    \includegraphics[width=1\linewidth]{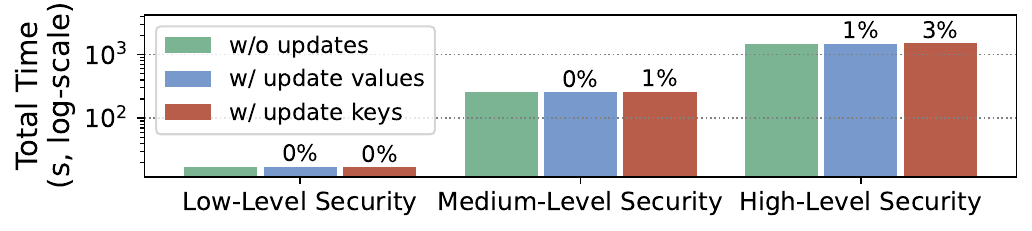}
    \caption{Online Execution Time for 100 Queries on Redis and with Updates\textnormal{ -- Each number above the bar is the ratio of the additional time caused by updates to the time without updates.}}
    \label{fig:redis}
\end{figure}

As shown in~\Cref{fig:redis}, updates cause at most a 3\% increase in total query time. This minor delay arises because some queries require fetching the updated \pgm after a version switch, ensuring that all lookup results remain up-to-date. We also track the number of update operations performed during the execution of 100 queries. In the high-security evaluation, the server completes 1,120 real-time value updates and 24 periodic batch updates for keys/databases. The real-time updates take only 70 ms each, including reading plaintext from Redis, updating it, and writing it back to Redis. A periodic batch update takes approximately 60 seconds, however, we use MVCC to further reduce the impact of update operations by performing most tasks asynchronously in background threads. Compared to Chalamet, which requires 1.5 hours to reinitialize a database of the same size, \framework significantly reduces update cycles, making it more practical for real-world scenarios.

\section{Related Work}
\para{Relaxed Security in Query Processing.}
Several works~\cite{bater2018shrinkwrap,qiu2023doquet,qin2022adore,zhang2023longshot,wang2024pripl,seeman2023privately,zhang2023dprovdb,fu2023dp,dong2023better,wang2025jodes,luo2023secure} have applied differential privacy (DP) to data management to provide relaxed security models with theoretical guarantees for secure query processing, thus enhancing performance. Specifically, Shrinkwrap~\cite{bater2018shrinkwrap} introduces DP-based noise to intermediate results, concealing the true size of the intermediate data and improving overall performance by eliminating the need for worst-case padding after each operator's execution. Adore~\cite{qin2022adore} and Doquet~\cite{qiu2023doquet} address access pattern leakage during relational operations and join queries using differentially oblivious operators and private data structures. Longshot~\cite{zhang2023longshot} tackles the challenges of indexing a growing database by combining DP with secure multiparty computation (MPC).
While prior work primarily targets scenarios involving online analytical processing (OLAP) or encrypted data, \framework enables private queries on public key-value stores, offering relaxed security with theoretical guarantees, ensuring that each query is indistinguishable from its neighbors.

\para{Privacy Enhancements in DBMS.}
In recent years, significant efforts have focused on enabling clients to outsource private data while ensuring secure query processing~\cite{popa2011cryptdb,antonopoulos2020azure,arasu2013orthogonal,eskandarian2017oblidb,priebe2018enclavedb,tu2013processing,sha2024object}. These approaches typically involve encrypting data, storing it on a cloud server, and using techniques such as order-preserving encryption~\cite{agrawal2004order}, homomorphic encryption~\cite{wong2014secure,tu2013processing,poddar2016arx}, searchable encryption~\cite{cash2014dynamic} and Trusted Execution Environments (TEE)~\cite{costan2016intel,bailleu2019speicher,sun2021building,wang2022operon,battiston2024duckdb,sha2023tee,zou2024salus} to achieve this goal. In contrast, \framework is designed for public datasets, which focuses on query privacy. Many works align with our goal~\cite{angel2018pir,menon2022spiral,melchor2016xpir,ahmad2021addra,mittal2011pir,zhou2024piano,patel2023don,ali2021communication}, mainly based on homomorphic encryption. For example, Pantheon~\cite{ahmad2022pantheon} and Constant-weight PIR~\cite{mahdavi2022constant} implement equality-checking operators and invoke traditional PIR schemes for homomorphic operations. While our framework supports similar homomorphic operations to return target key-value pairs, it also offers the flexibility to utilize other schemes, such as direct downloads when bandwidth is sufficient. Moreover, existing PIR schemes typically require all data to participate in computation or communication, resulting in poor scalability and limited support for large datasets. However, our scheme allows for flexibility in adjusting the range of data participation in computations without requiring re-encoding. This works seamlessly with our definition of distance-based indistinguishability. By offering users the ability to relax security and select different levels of security guarantees based on their needs, \framework significantly accelerates query response, making \framework a more practical solution.

\section{Conclusion}


We present \framework, a framework that enables users to perform secure queries on public key-value stores, while empowering users to tailor privacy protections to their specific needs. 
By employing the novel concept of distance-based indistinguishability and an adaptive retrieval mechanism supporting both direct downloads and an enhanced PIR scheme, \framework achieves a fine balance between privacy and performance. Our evaluations confirm that \framework efficiently supports a wide range of security configurations while maintaining practical query response times, even on large datasets.

\begin{acks}
This work was partially supported by the Shanghai Qi Zhi Institute Innovation Program (SQZ202406 \& SQZ202314) and the National Social Science Foundation of China (Grant No. 22 \& ZD147).
\end{acks}

\newpage

\balance
\bibliographystyle{ACM-Reference-Format}
\bibliography{sample-base}

\end{document}